\theoremstyle{plain}
\newtheorem{Theorem}{Theorem}
\newtheorem{Corollary}{Corollary}
\newtheorem{Lemma}{Lemma}
\newtheorem{Proposition}{Proposition}
\theoremstyle{Definition}
\newtheorem{Definition}{Definition}
\theoremstyle{remark}
\newtheorem{Remark}{Remark}
\numberwithin{equation}{section}
\email{jjrodriguezv@unal.edu.co}
\email{wzuniga@math.cinvestav.mx}
\keywords{$p$-adic numbers, }
\subjclass[2000]{primary ; secondary .}
\begin{document}
\title[ ]{ Elliptic Pseudo-Differential Equations and Sobolev Spaces over $p$-adic Fields}
\author{J. J. Rodr\'{\i}guez-Vega}
\address{Departamento de Matem\'{a}ticas, Universidad Nacional de Colombia, Ciudad
Universitaria, Bogot\'{a} D.C., Colombia.}
\author{W. A. Z\'{u}\~{n}iga-Galindo}
\address{Centro de Investigaci\'{o}n y de Estudios Avanzados del I.P.N., Departamento
de Matem\'{a}ticas, Av. Instituto Polit\'{e}cnico Nacional 2508, Col. San
Pedro Zacatenco, M\'{e}xico D.F., C.P. 07360, M\'{e}xico. }

\begin{abstract}
We study the solutions of equations of type $f(D,\alpha)u=v$, where
$f(D,\alpha)$ is a $p$-adic pseudo-differential operator. If $v$ is a
Bruhat-Schwartz function, then there exists a distribution $E_{\alpha}$, a
fundamental solution, such that $u=E_{\alpha}\ast v$ is a solution. However,
it is unknown to which function space $E_{\alpha}\ast v$ belongs. In this
paper, we show that if $f(D,\alpha)$ is an elliptic operator, then
$u=E_{\alpha}\ast v$ belongs to a certain Sobolev space. Furthermore, we give
conditions for the continuity and uniqueness of $u$. By modifying the Sobolev
norm, we can establish that $f(D,\alpha)$\ gives an isomorphism between
certain Sobolev spaces.

\end{abstract}
\keywords{$p$-adic fields, p-adic pseudo-differential operators, fundamental solutions,
$p$-adic Sobolev spaces.}
\subjclass{Primary: 46S10, 47S10; Secondary: 35S05, 11S80.}
\maketitle

\section{Introduction}

In recent years $p-$adic analysis has received a lot of attention due to its
applications in mathematical physics, see e.g. \cite{A-K2}, \cite{A-B-K-O},
\cite{A-B-O}, \cite{KH1}, \cite{KH2}, \cite{KO2}, \cite{R-T}, \cite{Va},
\cite{V-V-Z} and references therein. As a consequence new mathematical
problems have emerged, among them, the study of $p$-adic pseudo-differential
equations, see e.g. \cite{A-K-S}, \cite{Chuong-Co}, \cite{KO1}, \cite{KH0},
\cite{KO2}, \cite{KO3}, \cite{KO4}, \cite{KO5}, \cite{Ro-Zu}, \cite{V-V-Z},
\cite{Z-G1}, \cite{Z-G2}, \cite{Z-G3} and references therein. In this paper,
we study the solutions of $p$-adic elliptic pseudo-differential equations on
Sobolev spaces.

A pseudo-differential operator $f(D,\beta)$ is an operator of the form
\[
\left(  f(D,\alpha)\varphi\right)  (x)=\mathcal{F}_{\xi\rightarrow x}%
^{-1}\left(  |f(\xi)|_{p}^{\alpha}\mathcal{F}_{x\rightarrow\xi}\phi(x)\right)
,\text{ \ }\phi\in S,
\]
where $\mathcal{F}$ denotes the Fourier transform, $\alpha$ is a positive real
number, $S$ denotes the $\mathbb{C}$-vector space of Bruhat-Schwartz functions
over $\mathbb{Q}_{p}^{n}$, and $f(\xi)\in\mathbb{Q}_{p}[\xi_{1},\dotsc,\xi
_{n}]$. If $f(\xi)$ is a homogeneous polynomial of degree $d$ satisfying
\[
f(\xi)=0\text{ if and only if }\xi=0,
\]
then the corresponding operator is called an elliptic pseudo-differential
operator. At any case, the operator $f(D,\beta)$ is continuous and has a
self-adjoint extension with dense domain in $L^{2}(\mathbb{Q}_{p}^{n})$. This
operator is considered to be a $p$-adic analogue of a linear partial elliptic
differential operator with constant coefficients. A $p$-adic
pseudo-differential equation is an equation of type
\[
f(D,\alpha)u=v.
\]
If $v\in S$, then there exists a distribution $E_{\alpha}$, a fundamental
solution, such that $u=E_{\alpha}\ast v$ is a solution. The existence of a
fundamental solution for general pseudo-differential operators was established
by the second author in \cite{Z-G1} by adapting the proof given by Atiyah for
the Archimedean case \cite{Atiyah}. However, it is unknown to which function
space $E_{\alpha}\ast v$ belongs. In this paper, we show that if $f(D,\alpha)$
is an elliptic operator, then $u=E_{\alpha}\ast v$ belongs to a certain
Sobolev space (see Theorem \ref{mainresult}). Furthermore, we give conditions
for the continuity and uniqueness of $u$. By modifying the Sobolev norm, we
can establish that $f(D,\alpha)$ gives an isomorphism between certain Sobolev
spaces, (see Propositions \ref{prop1}, \ref{prop2} and Theorem \ref{mainII}).
Our approach is based on the explicit calculation of fundamental solutions of
pseudo-differential operators on certain function spaces and the fact that
elliptic pseudo-differential operators behave like the Taibleson operator when
acting on certain function spaces (see Theorems \ref{fund solu Tai},
\ref{fund solu ellip}).

\textbf{Acknowledgement.} The authors wish to thank the referee for his/her
careful reading of the original manuscript.

\section{Preliminary Results}

We summarize some basic facts about $p$-adic analysis that will be used in
this paper. For a complete exposition, we refer the reader to \cite{TA},
\cite{V-V-Z}.

Let $\mathbb{Q}_{p}$ be the field of the $p$-adic numbers, and let
$\mathbb{Z}_{p}$ be the ring of $p$-adic integers. For $x\in\mathbb{Q}_{p}$,
let $v(x)\in\mathbb{Z}\cup\left\{  \infty\right\}  $ denote the valuation of
$x$ normalized by the condition $v(p)=1$. By definition $v(x)=\infty$ if and
only if $x=0$. Let $|x|_{p}=p^{-v(x)}$ be the normalized absolute value. Here,
by definition $|x|_{p}=0$ if and only if $x=0$. We extend the $p$-adic
absolute value to $\mathbb{Q}_{p}^{n}$ as follows:
\[
||x||_{p}:=\text{max}\{|x_{1}|_{p},\dotsc,|x_{n}|_{p}\},\text{ for }%
x=(x_{1},\dotsc,x_{n})\in\mathbb{Q}_{p}^{n}.
\]

We define the \textit{exponent of local constancy} of $\varphi(x)\in
S(\mathbb{Q}_{p}^{n})$ as the smallest integer, $l\geq0$, with the property
that, for any $x\in\mathbb{Q}_{p}^{n}$,
\[
\varphi(x+x^{\prime})=\varphi(x)\text{ if }||x^{\prime}||_{p}\leq p^{-l}.
\]

For $x$, $y$ in $\mathbb{Q}_{p}^{n}$, we put $x\cdot y=\sum_{i=1}^{n}
x_{i}y_{i}$.

Let $\Psi$ denote an additive character of $\mathbb{Q}_{p}$, trivial on
$\mathbb{Z}_{p}$, but not on $p^{-1}\mathbb{Z}_{p}$. For $\varphi\in
S(\mathbb{Q}_{p}^{n})$, we define its Fourier transform as
\[
(\mathcal{F}\varphi)(\xi) = \int_{\mathbb{Q}_{p}^{n}} \Psi(-x\cdot\xi
)\varphi(x)\,dx,
\]
where $dx$ denotes the Haar measure of $\mathbb{Q}_{p}^{n}$ normalized in such
a way that $\mathbb{Z}_{p}^{n}$ has measure one.

We denote by $\chi_{r}$, $r\in\mathbb{Z}$, the characteristic function of the
polydisc $B_{r}(0):=(p^{r}\mathbb{Z}_{p})^{n}$. For any $\varphi\in S$, we
set
\[
r_{\varphi}:=\text{min}\{r\in\mathbb{N}\mid\varphi|_{B_{r}\left(  {0}\right)
}=\varphi(0)\}.
\]

\begin{Definition}
We set $\mathcal{L}:=\mathcal{L}(\mathbb{Q}_{p}^{n})=\{ \varphi\in S \mid
\int_{\mathbb{Q}_{p}^{n}} \varphi(x)\, dx =0\}$, and $\mathcal{W}%
:=\mathcal{W}(\mathbb{Q}_{p}^{n})$ to be the $\mathbb{C}$-vector space
generated by the functions $\chi_{r}$, $r \in\mathbb{Z}$.
\end{Definition}

We note that any $\varphi\in S$ can be written uniquely as $\varphi
_{\mathcal{L}}+\varphi_{\mathcal{W}}$, where $\varphi_{\mathcal{W}%
}=p^{r_{\varphi}n}\left(  \int_{\mathbb{Q}_{p}^{n}}\phi(x)\,dx\right)
\chi_{r_{\varphi}}\in\mathcal{W}$, and $\varphi_{\mathcal{L}}=\varphi
-\varphi_{\mathcal{W}}\in\mathcal{L}$. However, $S$ is not the direct sum of
$\mathcal{L}$ and $\mathcal{W}$. The space $\mathcal{W}$ was introduced in
\cite{Z-G2}, and $\{\mathcal{F}(\varphi)\mid\varphi\in\mathcal{L}\}$ is a
Lizorkin space of second class \cite{A-K-S}.

\subsection{Elliptic Pseudo-differential Operators}

Let $f(\xi)\in\mathbb{Q}_{p}[\xi_{1},\dotsc,\xi_{n}]$ be a nonconstant
polynomial. A pseudo-differential operator $f(D,\alpha)$, $\alpha>0$, with
symbol $|f(\xi)|_{p}^{\alpha}$, is an operator of the form
\[
\left(  f(D,\alpha)\varphi\right)  =\mathcal{F}^{-1}\left(  |f|_{p}^{\alpha
}\mathcal{F}\varphi\right)  ,
\]
where $\varphi\in S$.

\begin{Definition}
Let $f(\xi)\in\mathbb{Q}_{p}[\xi_{1},\dotsc,\xi_{n}]$ be a nonconstant
polynomial. We say that $f(\xi)$ is an elliptic polynomial of degree $d$, if
it satisfies: $(i)$ $f(\xi)$ is a homogeneous polynomial of degree $d$, and
$(ii)$ $f(\xi)=0\Leftrightarrow\xi=0$.
\end{Definition}

\begin{Lemma}
\label{Lemma 1}\cite[Lemma 1]{Z-G3} Let $f(\xi)\in\mathbb{Q}_{p}[\xi
_{1},\dotsc,\xi_{n}]$ be an elliptic polynomial of degree $d$. There exist
positive constants, $C_{0}(f)$ and $C_{1}(f)$, such that
\[
C_{0}(f)||\xi||_{p}^{d}\leq|f(\xi)|_{p}\leq C_{1}(f)||\xi||_{p}^{d},\text{ for
every }\xi\in\mathbb{Q}_{p}^{n}.
\]

\end{Lemma}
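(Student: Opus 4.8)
The plan is to reduce the two-sided estimate to a statement on the \emph{unit sphere} $S := \{u \in \mathbb{Z}_{p}^{n} : \|u\|_{p} = 1\}$ by exploiting the homogeneity of $f$. Given $\xi \in \mathbb{Q}_{p}^{n}\setminus\{0\}$, write $\|\xi\|_{p} = p^{-m}$ with $m \in \mathbb{Z}$; then $\xi = p^{m}u$ for some $u \in S$. Since $f$ is homogeneous of degree $d$, one has $f(p^{m}u) = p^{md}f(u)$, hence
\[
|f(\xi)|_{p} = |p^{md}|_{p}\,|f(u)|_{p} = p^{-md}\,|f(u)|_{p} = \|\xi\|_{p}^{d}\,|f(u)|_{p}.
\]
Consequently it is enough to produce positive constants $C_{0}(f), C_{1}(f)$ such that $C_{0}(f) \leq |f(u)|_{p} \leq C_{1}(f)$ for every $u \in S$; the case $\xi = 0$ is trivial because $d \geq 1$.

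For the upper bound, write $f(\xi) = \sum_{\beta} c_{\beta}\xi^{\beta}$ with $c_{\beta} \in \mathbb{Q}_{p}$. For $u \in \mathbb{Z}_{p}^{n}$ we have $\|u\|_{p} \leq 1$, so $|u^{\beta}|_{p} \leq 1$, and the ultrametric inequality yields $|f(u)|_{p} \leq \max_{\beta}|c_{\beta}|_{p} =: C_{1}(f) < \infty$. (Equivalently, $|f|_{p}$ is continuous on the compact set $S$, hence bounded there.)

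For the lower bound we invoke ellipticity. The map $u \mapsto |f(u)|_{p}$ is continuous on $\mathbb{Q}_{p}^{n}$, being the composition of the polynomial map $f$ with $|\cdot|_{p}$; and $S$ is a closed subset of the compact set $\mathbb{Z}_{p}^{n}$, hence compact. Therefore the image $|f|_{p}(S)$ is a compact subset of $\mathbb{R}_{\geq 0}$. By condition $(ii)$ of the definition of an elliptic polynomial, $f$ does not vanish on $S$ (since $0 \notin S$), so $0 \notin |f|_{p}(S)$; a compact subset of $(0,\infty)$ is bounded away from zero, whence $C_{0}(f) := \min_{u \in S}|f(u)|_{p} > 0$. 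Combining this with the reduction of the first paragraph gives the stated inequalities. The main point of the argument is precisely this last step: the equivalence $f(\xi) = 0 \Leftrightarrow \xi = 0$, together with compactness of the $p$-adic unit sphere, is used in an essential way, since for a merely homogeneous (non-elliptic) $f$ the quantity $|f(u)|_{p}$ could tend to $0$ along $S$ and no such $C_{0}(f)$ would exist.
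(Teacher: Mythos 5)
Your proof is correct. The paper itself gives no proof of this lemma (it is quoted from \cite[Lemma 1]{Z-G3}), and your argument --- homogeneity to reduce to the unit sphere $\{\|u\|_{p}=1\}$, the ultrametric inequality for the upper bound, and compactness together with the non-vanishing condition for the lower bound --- is exactly the standard argument used in that reference.
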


We note that if $f(\xi)$ is elliptic, then $cf(\xi)$ is elliptic for any
$c\in\mathbb{Q}_{p}^{\times}$. For this reason, we will assume from now on
that the elliptic polynomials have coefficients in $\mathbb{Z}_{p}$.

\begin{Lemma}
\label{Lemma 2}\cite[Lemma 3]{Z-G3} Let $f(\xi)\in\mathbb{Q}_{p}[\xi
_{1},\dotsc,\xi_{n}]$ be an elliptic polynomial of degree $d$. Let
$A\subset\mathbb{Q}_{p}^{n}$ be a compact subset such that $0\notin A$. Then
there exists a positive integer $m=m(A,f)$ such that $|f(\xi)|_{p}\geq p^{-m}%
$, for any $\xi\in A$. Furthermore, for any covering of $A$ of the form
$\cup_{i=1}^{L}B_{i}$, with $B_{i}=z_{i}+(p^{m}\mathbb{Z}_{p})^{n}$, we have
$|f(\xi)|_{p}=|f(z_{i})|_{p}$ for any $\xi\in B_{i}$.
\end{Lemma}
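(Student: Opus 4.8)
The plan is to reduce the statement to two ingredients: a lower bound for $|f(\xi)|_p$ on $A$ coming from ellipticity (Lemma \ref{Lemma 1}), and a ``local constancy'' estimate showing that $|f|_p$ is unchanged along cosets of $(p^m\mathbb{Z}_p)^n$ that meet $A$, provided $m$ is taken large enough. Both follow from Lemma \ref{Lemma 1} combined with the strong triangle inequality; the only real work is a single clever choice of $m$ that makes the two estimates compatible.

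First I would control the size of $\|\xi\|_p$ on $A$. Since $A$ is compact it is bounded, so $A\subseteq(p^{-N}\mathbb{Z}_p)^n$ for some $N\ge 0$, whence $\|\xi\|_p\le p^{N}$ on $A$; since $A$ is closed with $0\notin A$, the image $\{\|\xi\|_p:\xi\in A\}$ is a set of powers of $p$ that is bounded above and bounded away from $0$, hence finite, so it has a smallest element $p^{-N'}$. Enlarging $N'$ we may assume $N'\ge N\ge 0$, so $p^{-N'}\le\|\xi\|_p\le p^{N'}$ on $A$. By Lemma \ref{Lemma 1}, $|f(\xi)|_p\ge C_{0}(f)\|\xi\|_p^{d}\ge C_{0}(f)p^{-N'd}>0$ for $\xi\in A$. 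Now choose a positive integer $m=m(A,f)$ so large that
\[
p^{-m}< C_{0}(f)\,p^{-N'(2d-1)}\qquad\text{and}\qquad m\ge N'.
\]
Since $2d-1\ge d$ and $N'\ge0$, the first inequality forces $p^{-m}< C_{0}(f)p^{-N'd}\le|f(\xi)|_p$ for every $\xi\in A$, which is the first assertion of the lemma.

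The core step is the local constancy. I would write $f(\xi)=\sum_{|\alpha|=d}c_\alpha\xi^{\alpha}$ with $c_\alpha\in\mathbb{Z}_p$ (legitimate after the normalization fixed just before the lemma), and take $z\in A$, $h\in(p^m\mathbb{Z}_p)^n$. Expanding each factor $(z_j+h_j)^{\alpha_j}$ and multiplying out, $(z+h)^{\alpha}-z^{\alpha}$ is a $\mathbb{Z}$-linear combination of monomials $z^{\alpha-\gamma}h^{\gamma}$ with $\gamma\neq 0$; since $|c_\alpha|_p\le 1$, $|z^{\alpha-\gamma}h^{\gamma}|_p\le\|z\|_p^{\,d-|\gamma|}\|h\|_p^{\,|\gamma|}$, and $\|h\|_p\le p^{-m}\le p^{-N'}\le\|z\|_p$ (here $m\ge N'$ is used), each such monomial has absolute value at most $\|z\|_p^{\,d-1}\|h\|_p$. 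The strong triangle inequality then gives
\[
|f(z+h)-f(z)|_p\le\|z\|_p^{\,d-1}\|h\|_p\le p^{N'(d-1)}\,p^{-m}< C_{0}(f)\,p^{-N'd}\le|f(z)|_p .
\]
Writing $f(z+h)=f(z)+\bigl(f(z+h)-f(z)\bigr)$, the increment is strictly smaller in absolute value than $f(z)$, so by the isosceles property of $|\cdot|_p$ we get $|f(z+h)|_p=|f(z)|_p$; that is, $|f|_p$ is constant on $z+(p^m\mathbb{Z}_p)^n$ for every $z\in A$.

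Finally, given a covering $A\subseteq\bigcup_{i=1}^{L}B_i$ with $B_i=z_i+(p^m\mathbb{Z}_p)^n$, I may assume each $B_i$ meets $A$ (a ball disjoint from $A$ can be dropped). Fixing $i$ and choosing $\xi_0\in B_i\cap A$, the ultrametric property gives $\xi-\xi_0\in(p^m\mathbb{Z}_p)^n$ for every $\xi\in B_i$, so the previous step applied with $z=\xi_0$ yields $|f(\xi)|_p=|f(\xi_0)|_p$; taking $\xi=z_i$ as well shows $|f(z_i)|_p=|f(\xi_0)|_p$, hence $|f(\xi)|_p=|f(z_i)|_p$ for all $\xi\in B_i$. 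The one genuinely delicate point is the choice of $m$ in the third paragraph: it must be large enough that the elementary increment estimate beats the ellipticity lower bound uniformly over all of $A$, after which the ultrametric inequality does everything else; the remaining arguments are routine compactness in $\mathbb{Q}_p^n$.
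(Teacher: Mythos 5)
Your argument is correct: the lower bound on $A$ follows from Lemma \ref{Lemma 1} together with compactness, and the local constancy follows from the ultrametric ``isosceles'' property once $m$ is chosen so that the increment estimate $|f(z+h)-f(z)|_p\le \|z\|_p^{d-1}\|h\|_p$ falls strictly below the ellipticity lower bound; all the intermediate inequalities check out. Note that the paper itself gives no proof of this lemma (it is imported verbatim from \cite[Lemma 3]{Z-G3}), so there is nothing to compare against here, but your proof is the standard one and is complete, including the sensible reading that only balls $B_i$ meeting $A$ are relevant in the final covering statement.
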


\begin{Definition}
Let $f(\xi)\in\mathbb{Z}_{p}[\xi_{1},\dotsc,\xi_{n}]$ be an elliptic
polynomial of degree $d$. We will say that $|f|_{p}^{\beta}$ is an elliptic
symbol, and that $f(D,\beta)$ is an elliptic pseudo-differential operator of
order $d$.
\end{Definition}

\subsection{Igusa's local zeta functions}

Let $g(x)\in\mathbb{Q}_{p}[x]$, $x=(x_{1},\dotsc,x_{n})$, be a non-constant
polynomial. Igusa's local zeta function associated to $g(x)$ is the
distribution
\[
\langle|g|_{p}^{s},\varphi\rangle=\int\limits_{\mathbb{Q}_{p}^{n}%
\smallsetminus g^{-1}(0)}|g(x)|_{p}^{s}\varphi(x)\,dx,
\]
for $s\in\mathbb{C}$, $\text{Re}(s)>0$, where $\varphi\in S$, and $dx$ denotes
the normalized Haar measure of $\mathbb{Q}_{p}^{n}$. The local zeta functions
were introduced by Weil and their basic properties for general $g(x)$ were
first studied by Igusa. A central result in the theory of local zeta functions
established that $|g|_{p}^{s}$ admits a meromorphic continuation to the
complex plane such that $\langle|g|_{p}^{s},\varphi\rangle$ is rational
function of $p^{-s}$ for each $\varphi\in S$. Furthermore, there exists a
finite set $\cup_{E\in\mathcal{E}}\{(N_{E},n_{E})\}$ of pairs of positive
integers such that
\[
\prod\limits_{E\in\mathcal{E}}(1-p^{-n_{E}-N_{E}s})|g|_{p}^{s}%
\]
is a holomorphic distribution on $S$. In particular, the real parts of the
poles of $|g|_{p}^{s}$ are negative rational numbers see \cite[Chap. 8]%
{Igusa}. The existence of a meromorphic continuation for the distribution
$|g|_{p}^{s}$ implies the existence of a fundamental solution for the
pseudo-differential operator with symbol $|g|_{p}^{\alpha}$, \cite{Z-G1}.

For a fixed $\varphi\in S$, we denote the integral $\langle|g|_{p}^{s}%
,\varphi\rangle$ by $Z_{\varphi}(s,g)$. In particular, $Z(s,g)=Z_{\chi_{0}%
}(s,g)$.

\begin{Lemma}
Let $f(x)\in\mathbb{Z}_{p}[x]$, $x=(x_{1},\dotsc,x_{n})$, be an elliptic
polynomial of degree $d$. Then
\[
Z(s,f)=\dfrac{L(p^{-s})}{1-p^{-ds-n}},
\]
where $L(p^{-s})$ is a polynomial in $p^{-s}$ with rational coefficients.
Furthermore, $s=-n/d$ is a pole of $Z(s,f)$.
\end{Lemma}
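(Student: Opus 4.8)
The plan is to exploit the homogeneity of $f$ to reduce the computation of $Z(s,f)=\int_{\mathbb{Z}_p^n\setminus f^{-1}(0)}|f(x)|_p^s\,dx$ to a single integral over the unit sphere times a geometric series. By ellipticity $f^{-1}(0)=\{0\}$, a null set, so I would decompose $\mathbb{Z}_p^n\setminus\{0\}=\bigsqcup_{j\ge 0}\{x:\|x\|_p=p^{-j}\}$, and on the $j$-th shell write $x=p^{j}u$ with $u$ ranging over $T_1:=\mathbb{Z}_p^n\setminus(p\mathbb{Z}_p)^n=\{u:\|u\|_p=1\}$. Since $f$ is homogeneous of degree $d$, $f(p^{j}u)=p^{jd}f(u)$, hence $|f(x)|_p^s=p^{-jds}|f(u)|_p^s$, while $dx=p^{-jn}\,du$. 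For $\mathrm{Re}(s)>-n/d$ the resulting series converges absolutely, which justifies the rearrangement and gives
\[
Z(s,f)=\Big(\sum_{j\ge 0}p^{-j(ds+n)}\Big)\int_{T_1}|f(u)|_p^{s}\,du=\frac{1}{1-p^{-ds-n}}\,J(s),\qquad J(s):=\int_{T_1}|f(u)|_p^{s}\,du .
\]

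Next I would show that $J(s)=L(p^{-s})$ for a polynomial $L$ with nonnegative rational coefficients, by applying Lemma~\ref{Lemma 2} to the compact set $A=T_1$, which does not contain $0$. That lemma produces $m=m(A,f)$ with $|f(u)|_p\ge p^{-m}$ on $A$ and, on each ball $B_i=z_i+(p^{m}\mathbb{Z}_p)^n$ of a finite covering of $A$, the value $|f(u)|_p=|f(z_i)|_p=p^{-k_i}$ is constant with $0\le k_i\le m$; since $m\ge 1$, each such coset lies entirely in $T_1$, so $T_1$ is a disjoint union of finitely many of them, each of measure $p^{-mn}$. Therefore
\[
J(s)=\sum_{i}p^{-mn}p^{-k_i s}=\sum_{k=0}^{m}c_k\,p^{-ks},\qquad c_k:=p^{-mn}\,\#\{i:k_i=k\}\in\mathbb{Q}_{\ge 0},
\]
so $L(y):=\sum_{k=0}^{m}c_k y^{k}$ works and $Z(s,f)=L(p^{-s})/(1-p^{-ds-n})$; as $L$ is a polynomial, the right-hand side furnishes the meromorphic continuation to all of $\mathbb{C}$.

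For the last assertion, note that the denominator $1-p^{-ds-n}$ has a (simple) zero at $s=-n/d$, so it suffices to verify that the numerator does not vanish there, i.e.\ $L(p^{n/d})\ne 0$. Since $J(s)\ge p^{-ms}\,\mu(T_1)>0$ for real $s$, where $\mu(T_1)=1-p^{-n}>0$, the polynomial $L$ is not identically zero; and because all its coefficients $c_k$ are nonnegative while $p^{n/d}>0$, the value $L(p^{n/d})=\sum_{k}c_k\,p^{kn/d}$ is a sum of nonnegative terms that are not all zero, hence strictly positive. Thus $s=-n/d$ is a (simple) pole of $Z(s,f)$.

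Regarding difficulty: every step is essentially bookkeeping once Lemmas~\ref{Lemma 1} and \ref{Lemma 2} are available — the homogeneity reduction and the geometric series are routine, and the finiteness of $J(s)$ as a polynomial in $p^{-s}$ is a direct consequence of the local constancy statement in Lemma~\ref{Lemma 2}. The only point requiring a genuine (if small) idea is excluding a spurious cancellation between numerator and denominator at $s=-n/d$, and this is handled by the positivity of the coefficients of $L$ rather than by any explicit evaluation of $L$.
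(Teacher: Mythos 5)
Your proof is correct and follows essentially the same route as the paper: both reduce to an integral of $|f|_p^s$ over the ``unit sphere'' $\mathbb{Z}_p^n\setminus(p\mathbb{Z}_p)^n$ via homogeneity (the paper uses the one-step self-similarity identity $Z(s,f)=p^{-ds-n}Z(s,f)+\int_{A'}|f|_p^s\,dx$ rather than summing the geometric series over shells, which is the same computation) and then invoke Lemma~\ref{Lemma 2} to express that integral as a polynomial in $p^{-s}$. The one substantive addition on your side is the explicit verification that $L(p^{n/d})>0$ via the nonnegativity of the coefficients, which settles the ``furthermore'' clause that the paper's proof leaves implicit; this is a worthwhile check, since a priori the numerator could cancel the zero of the denominator.
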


\begin{proof}
Let $A=\{x\in\mathbb{Z}_{p}^{n}\mid\text{ord}(x_{i})\geq d,\quad
i=1,\dotsc,n\}$, and $A^{\prime}=\{x\in\mathbb{Z}_{p}^{n}\mid\text{ord}%
(x_{i})<d,\text{ for some }i\}$. Then $\mathbb{Z}_{p}^{n}$ is the disjoint
union of $A$ and $A^{\prime}$ and
\begin{align*}
Z(s,f)  &  =\int_{A}|f(x)|_{p}^{s}\,dx+\int_{A^{\prime}}|f(x)|_{p}^{s}\,dx\\
&  =p^{-ds-n}Z(s,f)+\int_{A^{\prime}}|f(x)|_{p}^{s}\,dx,
\end{align*}
i.e., $Z(s,f)=\frac{1}{1-p^{-ds-n}}\int_{A^{\prime}}|f(x)|_{p}^{s}\,dx$. Since
$A^{\prime}$ is compact, by applying Lemma \ref{Lemma 2}, we find a covering
of $A^{\prime}=\cup_{i=1}^{L}B_{i}$, where $|f|_{p}$ is constant on each
$B_{i}$. Hence,
\[
\int_{A^{\prime}}|f(x)|_{p}^{s}\,dx=p^{-nm}\sum_{i=1}^{L}|f(z_{i})|_{p}^{s},
\]
and
\[
Z(s,f)=\dfrac{p^{-nm}\sum_{i=1}^{L}|f(z_{i})|_{p}^{s}}{1-p^{-ds-n}}.
\]

\end{proof}

\subsection{The Riesz Kernel}

We collect some well-know results about the Riesz kernel that will be used in
the next sections, we refer the reader to \cite{TA} or \cite{V-V-Z} for
further details.

The $p$-adic \emph{Gamma function} $\Gamma_{p}^{(n)}(s)$ is defined as
follows:
\[
\Gamma_{p}^{(n)}(s)=\frac{1-p^{s-n}}{1-p^{-s}}\text{, }s\in\mathbb{C}%
,{\ }s\neq0.
\]
The Gamma function is meromorphic with simple zeros at $n+\frac{2\pi i}{\ln
p}\mathbb{Z}$ and unique simple pole at $s=0$. In addition, it satisfies
\[
\Gamma_{p}^{(n)}(s)\Gamma_{p}^{(n)}(n-s)=1,\text{ for }s\notin\{0\}\cup
\{n+\frac{2\pi i}{\ln p}\mathbb{Z}\}.
\]

The \emph{Riesz kernel} $\mathit{R}_{s}$ is the distribution determined by the
function
\[
\mathit{R}_{s }(x)=\frac{||x||_{p}^{s -n}}{\Gamma_{p}^{(n)}(s) }%
,\quad\text{Re}(s)>0,{\ }s \notin n+\frac{2\pi i}{\ln p}\mathbb{Z} ,\quad
x\in\mathbb{Q}_{p}^{n}.
\]

The Riesz kernel has, as a distribution, a meromorphic continuation to
$\mathbb{C}$ given by
\begin{align*}
\left\langle \mathit{R}_{s}(x),\varphi(x)\right\rangle  &  =\frac{1-p^{-n}%
}{1-p^{s-n}}\varphi(0)+\frac{1-p^{-s}}{1-p^{s-n}}\int_{||x||_{p}>1}%
||x||_{p}^{s-n}\varphi(x)\,dx\\
&  +\frac{1-p^{-s}}{1-p^{s-n}}\int_{||x||_{p}\leq1}||x||_{p}^{s-n}%
(\varphi(x)-\varphi(0))\,dx,
\end{align*}
with poles at $n+\frac{2\pi i}{\ln p}\mathbb{Z}$. In particular, for
$\text{Re}(s)>0$,
\[
\langle\mathit{R}_{s}(x),\varphi(x)\rangle=\frac{1-p^{-s}}{1-p^{s-n}}%
\int_{\mathbb{Q}_{p}^{n}}\varphi(x)||x||_{p}^{s-n}\,dx,\quad s\notin
n+\frac{2\pi i}{\ln p}\mathbb{Z},
\]%
\begin{equation}
\langle\mathit{R}_{-s}(x),\varphi(x)\rangle=\frac{1-p^{s}}{1-p^{-s-n}}%
\int_{\mathbb{Q}_{p}^{n}}(\varphi(x)-\varphi(0))||x||_{p}^{-s-n}\,dx.
\label{1}%
\end{equation}

In the case $s=0$, by passing to the limit, we obtain
\[
\langle\mathit{R}_{0}(x),\varphi(x)\rangle:=\lim_{s\rightarrow0}\left\langle
\mathit{R}_{s}(x),\varphi(x)\right\rangle =\varphi(0),
\]
i.e., \ $\mathit{R}_{0}(x)=\delta\left(  x\right)  $, the Dirac delta
function. Therefore, $\mathit{R}_{s}\in S^{\prime}(\mathbb{Q}_{p}^{n})$, for
$s\in\mathbb{C\setminus}\left\{  n+\frac{2\pi i}{\ln p}\mathbb{Z}\right\}  $.

\begin{Remark}
\label{continuation of |x|} The distribution $||x||_{p}^{s}$, $\text{Re}%
(s)>0$, admits the following meromorphic continuation,
\begin{align*}
\left\langle ||x||_{p}^{s},\varphi(x)\right\rangle  &  =\frac{1-p^{-n}
}{1-p^{-s -n}}\varphi(0)+ \int_{||x||_{p}>1}||x||_{p}^{s} \varphi(x)\,dx\\
&  +\int_{||x||_{p}\leq1}||x||_{p}^{s} (\varphi(x)-\varphi(0))\,dx,
\quad\varphi\in S.
\end{align*}
In particular, all the poles of $||x||_{p}^{s}$ have real part equal to $-n$.
\end{Remark}

\begin{Lemma}
[{\cite[Chap. III, Theorem 4.5]{TA}}]\label{Fourier trans of Riesz} As element
of $S^{\prime}(\mathbb{Q}_{p}^{n})$, $\left(  \mathcal{F}\mathit{\ R}%
_{s}\right)  (x)$ equals $||x||_{p}^{-s}$, for $s\notin n+\frac{2\pi i}{\ln
p}\mathbb{Z}$.
\end{Lemma}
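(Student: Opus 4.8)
The plan is to prove the equivalent statement $\mathcal{F}\big(\|x\|_p^{s-n}\big)=\Gamma_p^{(n)}(s)\,\|\xi\|_p^{-s}$ in $S'(\mathbb{Q}_p^n)$, first on the strip $0<\mathrm{Re}(s)<n$ and then by analytic continuation; dividing by $\Gamma_p^{(n)}(s)$ then gives the claim. For the \emph{core computation} I would write $\|x\|_p^{s-n}=\sum_{j\in\mathbb{Z}}p^{j(s-n)}(\chi_{-j}-\chi_{-j+1})$, which converges in $S'$ for $\mathrm{Re}(s)>0$, and apply the continuous operator $\mathcal{F}$ termwise using $\mathcal{F}\chi_r=p^{-rn}\chi_{-r}$. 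After telescoping and rearranging -- legitimate in $S'$ precisely on the strip $0<\mathrm{Re}(s)<n$, where the boundary terms of the partial sums tend to $0$ in $S'$ -- one obtains $\mathcal{F}\big(\|x\|_p^{s-n}\big)=(1-p^{s-n})\sum_j p^{js}\chi_j$. For $\xi\neq 0$ with $\|\xi\|_p=p^k$ one has $\chi_j(\xi)=1\iff j\le -k$, so $\sum_j p^{js}\chi_j(\xi)=\|\xi\|_p^{-s}/(1-p^{-s})$, and hence the right-hand side agrees, away from the origin, with the function $\tfrac{1-p^{s-n}}{1-p^{-s}}\|\xi\|_p^{-s}=\Gamma_p^{(n)}(s)\,\|\xi\|_p^{-s}$. (Equivalently, the same closed form comes from evaluating $\int_{\mathbb{Q}_p^n}\|x\|_p^{s-n}\Psi(-x\cdot\xi)\,dx$ sphere by sphere.)

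Next I would \emph{identify the distribution}. For $0<\mathrm{Re}(s)<n$ each of $\|x\|_p^{s-n}$, $R_s$, and $\|\xi\|_p^{-s}$ has exponent strictly above $-n$, hence is locally integrable and equals the distribution it defines by direct integration; moreover, by Remark \ref{continuation of |x|} the meromorphic continuation of $\|\xi\|_p^{-s}$ coincides, in this strip, with that honest function. So it only remains to verify that $\mathcal{F}R_s$ carries no extra mass at the origin. For this, pair with $\varphi\in S$: since $\mathcal{F}\varphi$ has compact support and $R_s\in L^{1}_{\mathrm{loc}}$, the quantity $\langle\mathcal{F}R_s,\varphi\rangle=\langle R_s,\mathcal{F}\varphi\rangle=\Gamma_p^{(n)}(s)^{-1}\int\|x\|_p^{s-n}\mathcal{F}\varphi(x)\,dx$ is an absolutely convergent integral over a polydisc; applying Fubini and then, on the small ball where $\varphi$ is constant, replacing $\varphi(\xi)$ by $\varphi(0)$, one is left with an elementary geometric integral whose value -- using $\Gamma_p^{(n)}(s)\Gamma_p^{(n)}(n-s)=1$ -- is exactly the $\varphi(0)$-term in the continuation formula for $\|\xi\|_p^{-s}$. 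This gives $\mathcal{F}R_s=\|\cdot\|_p^{-s}$ on $0<\mathrm{Re}(s)<n$.

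Finally, for each fixed $\varphi\in S$ the functions $s\mapsto\langle\mathcal{F}R_s,\varphi\rangle=\langle R_s,\mathcal{F}\varphi\rangle$ and $s\mapsto\langle\|\cdot\|_p^{-s},\varphi\rangle$ are meromorphic on $\mathbb{C}$ with poles contained in $n+\tfrac{2\pi i}{\ln p}\mathbb{Z}$ (the former from the explicit meromorphic continuation of the Riesz kernel recalled in the text, the latter from Remark \ref{continuation of |x|} with $s\mapsto -s$); since they agree on a strip, they agree for every $s\notin n+\tfrac{2\pi i}{\ln p}\mathbb{Z}$, which is the assertion. The one genuinely delicate point is the origin bookkeeping in the second step: after Fubini the inner integral is only a truncation of $\Gamma_p^{(n)}(s)\|\xi\|_p^{-s}$, and one must check that its discrepancy near $\xi=0$ is absorbed without remainder by the regularizing $\varphi(0)$-term of $\|\xi\|_p^{-s}$ -- this is where the functional equation of $\Gamma_p^{(n)}$ enters, and it is the only non-routine computation in the argument.
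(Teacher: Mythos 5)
The paper gives no proof of this lemma---it is imported verbatim from \cite[Chap.\ III, Theorem 4.5]{TA}---so the comparison is between your argument and the classical one, which your proposal essentially reproduces; and your argument is correct. The sphere decomposition $\|x\|_p^{s-n}=\sum_{j}p^{j(s-n)}(\chi_{-j}-\chi_{-j+1})$, termwise application of $\mathcal{F}\chi_r=p^{-rn}\chi_{-r}$, and telescoping do yield $(1-p^{s-n})\sum_j p^{js}\chi_j$, with the rearrangement valid exactly on the strip $0<\mathrm{Re}(s)<n$ as you say. Your ``delicate point'' about a possible multiple of $\delta$ at the origin resolves just as you predict: with $\mathcal{F}\varphi$ supported in $\|x\|_p\le p^{N}$, the inner integral $\int_{\|x\|_p\le p^{N}}\|x\|_p^{s-n}\Psi(-x\cdot\xi)\,dx$ equals $\Gamma_p^{(n)}(s)\|\xi\|_p^{-s}$ for $\|\xi\|_p\ge p^{1-N}$ and is the constant $(1-p^{-n})p^{Ns}/(1-p^{-s})$ on the ball $\|\xi\|_p\le p^{-N}$, where $\varphi\equiv\varphi(0)$; multiplying that constant by the volume $p^{-Nn}$ and by $\Gamma_p^{(n)}(s)^{-1}$ gives $\varphi(0)(1-p^{-n})p^{N(s-n)}/(1-p^{s-n})=\varphi(0)\int_{\|\xi\|_p\le p^{-N}}\|\xi\|_p^{-s}\,d\xi$, so no term supported at the origin survives. (In fact, since on the strip the double sum $\sum_j p^{j\,\mathrm{Re}(s)}\int_{\|\xi\|_p\le p^{-j}}|\varphi|$ is absolutely convergent, a single Fubini interchange already identifies $\sum_j p^{js}\chi_j$ with the locally integrable function $\|\xi\|_p^{-s}/(1-p^{-s})$, so your first and second steps overlap; either one alone settles the identity on the strip.) The final continuation step is also sound: for fixed $\varphi$ both $\langle R_s,\mathcal{F}\varphi\rangle$ (from the paper's displayed continuation of the Riesz kernel) and $\langle\|\cdot\|_p^{-s},\varphi\rangle$ (from Remark \ref{continuation of |x|} with $s\mapsto-s$) are meromorphic in $s$ with poles confined to $n+\tfrac{2\pi i}{\ln p}\mathbb{Z}$, so equality on the strip propagates to all admissible $s$.
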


The following explicit formula will be used in the next sections.

\begin{Lemma}
\label{elliptic = Taibleson} Let $f(x)\in\mathbb{Q}_{p}[x]$, $x=(x_{1}%
,\dotsc,x_{n})$, be an elliptic polynomial of degree $d$. Then
\[
|f|_{p}^{s}=\frac{(1-p^{ds})L(p^{-s})}{(1-p^{-n})(1-p^{-ds-n})}\mathit{R}%
_{ds+n},\quad s\in\mathbb{C}%
\]
as distributions on $\mathcal{W}$. Here $L(p^{-s})$ is the numerator of
$Z(s,f)$ which is a polynomial in $p^{-s}$ with rational coefficients.
\end{Lemma}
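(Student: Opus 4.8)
The plan is to verify the stated identity by testing both sides against the spanning family $\{\chi_{r}\}_{r\in\mathbb{Z}}$ of $\mathcal{W}$; since an equality of distributions on $\mathcal{W}$ is the same as agreement on every generator, it suffices to compute $\langle|f|_{p}^{s},\chi_{r}\rangle$ and $\langle\mathit{R}_{ds+n},\chi_{r}\rangle$ and compare. Because $f$ is elliptic, $f^{-1}(0)=\{0\}$ has measure zero, so for $\operatorname{Re}(s)>0$ one has $\langle|f|_{p}^{s},\chi_{r}\rangle=\int_{(p^{r}\mathbb{Z}_{p})^{n}}|f(x)|_{p}^{s}\,dx$. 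Substituting $x=p^{r}y$ and using that $f$ is homogeneous of degree $d$, so that $|f(p^{r}y)|_{p}=p^{-rd}|f(y)|_{p}$, together with $dx=p^{-rn}dy$, gives
\[
\langle|f|_{p}^{s},\chi_{r}\rangle=p^{-r(ds+n)}\,Z(s,f)=\frac{p^{-r(ds+n)}L(p^{-s})}{1-p^{-ds-n}},
\]
the last equality by the formula for $Z(s,f)$ established above.

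Next I would compute the right-hand side. Writing $t=ds+n$ and using $\mathit{R}_{t}(x)=\|x\|_{p}^{t-n}/\Gamma_{p}^{(n)}(t)$ for $\operatorname{Re}(t)>0$, the same rescaling $x=p^{r}y$ reduces the problem to $\int_{\mathbb{Z}_{p}^{n}}\|y\|_{p}^{t-n}\,dy$, which is evaluated by slicing $\mathbb{Z}_{p}^{n}$ into the spheres $\{\|y\|_{p}=p^{-j}\}$ of measure $p^{-jn}(1-p^{-n})$, yielding the geometric series $(1-p^{-n})/(1-p^{-t})$. Combined with $\Gamma_{p}^{(n)}(t)=(1-p^{t-n})/(1-p^{-t})$ this gives
\[
\langle\mathit{R}_{ds+n},\chi_{r}\rangle=\frac{(1-p^{-n})\,p^{-r(ds+n)}}{1-p^{ds}}.
\]
Multiplying this by the scalar $(1-p^{ds})L(p^{-s})/\bigl[(1-p^{-n})(1-p^{-ds-n})\bigr]$ reproduces exactly the value of $\langle|f|_{p}^{s},\chi_{r}\rangle$ found above, so the asserted identity holds on $\mathcal{W}$ for every $s$ with $\operatorname{Re}(s)>0$ (outside the discrete set of poles).

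Finally, for each fixed $\chi_{r}$ both sides are rational functions of $p^{-s}$, so the identity propagates to all $s\in\mathbb{C}$ by meromorphic continuation. The one point requiring care is the behavior at $p^{ds}=1$, where $\mathit{R}_{ds+n}$ has a pole: the factor $(1-p^{ds})$ in the numerator cancels it, so the right-hand side is a genuine distribution on $\mathcal{W}$ for every $s\in\mathbb{C}$, and the equality is valid there as well. Apart from this pole bookkeeping and the routine tracking of the Haar-measure normalization under rescaling, the argument is a direct computation.
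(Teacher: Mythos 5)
Your proof is correct and follows essentially the same route as the paper: evaluate both sides on the spanning family of $\mathcal{W}$ for $\operatorname{Re}(s)>0$ (where the computation reduces to $Z(s,f)$ and a geometric series for the Riesz kernel) and then extend by meromorphic continuation, since both sides are rational in $p^{-s}$. The only cosmetic difference is that you test directly against $\chi_{r}$, whereas the paper tests against $\mathcal{F}(\chi_{r})=p^{-nr}\chi_{-r}$ and removes the Fourier transform at the end using the Fourier-invariance of $\mathcal{W}$.
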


\begin{proof}
Let $\varphi\in\mathcal{W}$, then
\[
\varphi(x)=\sum_{i}c_{i}\chi_{r_{i}}(x),
\]
where $c_{i}\in\mathbb{C}$, $r_{i}\in\mathbb{Z}$ (recall that $\mathcal{F}%
(\chi_{r})=p^{-nr}\chi_{-r}$). The action of $|f|_{p}^{s}$ on $\mathcal{F}%
\varphi$ can be explicitly described as follows:
\[
\langle|f|_{p}^{s},\mathcal{F}\varphi\rangle=\sum_{i}c_{i}\langle|f|_{p}%
^{s},p^{-nr_{i}}\chi_{-r_{i}}\rangle,
\]
but
\[
\langle|f|_{p}^{s},p^{-nr_{i}}\chi_{-r_{i}}\rangle=p^{-nr_{i}}\int
_{\mathbb{Q}_{p}^{n}}|f(x)|_{p}^{s}\chi_{-r_{i}}(x)\,dx=p^{dr_{i}s}Z(s,f),
\]
for $\text{Re}(s)>0$, thus
\[
\langle|f|_{p}^{s},\mathcal{F}\varphi\rangle=Z(s,f)\sum_{i}c_{i}p^{dr_{i}%
s},\quad\text{Re}(s)>0.
\]

On the other hand,
\[
\langle\dfrac{1-p^{ds}}{1-p^{-n}}\mathit{R}_{ds+n}, p^{-nr_{i}}\chi_{-r_{i}}
\rangle=\langle\dfrac{1-p^{-ds-n}}{1-p^{-n}}||x||_{p}^{ds}, p^{-nr_{i}}%
\chi_{-r_{i}} \rangle= p^{dr_{i} s},
\]
for every $r_{i} \in\mathbb{Z}$ and $\text{Re}(s)>0$. Then we have
\[
\langle|f|_{p}^{s},\mathcal{F}\varphi\rangle= \dfrac{1-p^{ds}}{1-p^{-n}}
Z(s,f) \langle\mathit{R}_{ds+n},\mathcal{F}\varphi\rangle,
\]
for $\text{Re}(s)>0$. Now $Z(s,f)$ and $\mathit{R}_{ds+n}$ have a meromorphic
continuation to the complex plane, therefore this formula extends to
$\mathbb{C}$. Finally, since the Fourier transform establishes a $\mathbb{C}%
$-isomorphism on $\mathcal{W}$, it is possible remove the Fourier transform symbol.
\end{proof}

\subsection{The Taibleson Operator}

\begin{Definition}
The Taibleson pseudo-differential operator $D_{T}^{\alpha}$, $\alpha>0$, is
defined as
\[
(D_{T}^{\alpha}\varphi)(x)=\mathcal{F}_{\xi\rightarrow x}^{-1}\left(
||\xi||_{p}^{\alpha}\mathcal{F}_{x\rightarrow\xi}\varphi\right)  \text{, for
}\varphi\in S.
\]

\end{Definition}

As a consequence of the Lemma \ref{Fourier trans of Riesz} and (\ref{1}), one
gets
\begin{align*}
\left(  D_{T}^{\alpha}\varphi\right)  \left(  x\right)   &  =\left(
\mathit{k}_{-\alpha}\ast\varphi\right)  \left(  x\right)  =\\
&  \frac{1-p^{\alpha}}{1-p^{-\alpha-n}}\int_{\mathbb{Q}_{p}^{n}}%
||y||_{p}^{-\alpha-n}(\varphi(x-y)-\varphi(x))\,dy.
\end{align*}

The right-hand side of \ previous formula makes sense for a wider class of
functions than $S(\mathbb{Q}_{p})$, for example, for the class $\mathfrak{E}%
_{\alpha}(\mathbb{Q}_{p}^{n})$ of locally constant functions $\varphi(x)$
satisfying
\[
\int_{||x||_{p}\geq1}||x||_{p}^{-\alpha-n}|\varphi(x)|\,dx<\infty.
\]

\begin{Remark}
As a consequence of the previous observations we may assume that the constant
functions are contained in the domain of $D_{T}^{\alpha}$, and that
$D_{T}^{\alpha}\varphi=0$, for any constant function.
\end{Remark}

\section{Fundamental Solutions for the Taibleson Operator}

We now consider the following pseudo-differential equation:
\begin{equation}
\label{eq pseudo}D_{T}^{\alpha}u=v, \quad\text{with $v\in S$}, \text{ and
}\alpha>0.
\end{equation}

We say that $E_{\alpha}\in S^{\prime}$ is a \textit{fundamental solution} of
(\ref{eq pseudo}) if $E_{\alpha}\ast v$ is a solution.

\begin{Lemma}
\label{funda + k} If $E_{\alpha}$ is a fundamental solution of
(\ref{eq pseudo}), then for any constant $c$, $E_{\alpha}+c$ is also a
fundamental solution.
\end{Lemma}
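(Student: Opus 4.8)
The plan is to show directly that if $E_\alpha \ast v$ solves $D_T^\alpha u = v$, then $(E_\alpha + c) \ast v$ also solves it, by exploiting linearity of convolution and the fact, recorded in the Remark above, that $D_T^\alpha$ annihilates constant functions. First I would write $(E_\alpha + c) \ast v = E_\alpha \ast v + c \ast v$, where $c$ here denotes the constant distribution with value $c$. The key computation is that $c \ast v$ is itself a constant function: indeed $(c \ast v)(x) = \int_{\mathbb{Q}_p^n} c\, v(y)\, dy = c \left( \int_{\mathbb{Q}_p^n} v(y)\, dy \right)$, which is independent of $x$ since $v \in S$ is integrable. Call this constant $c' = c \int v$.

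Next I would apply $D_T^\alpha$ to both sides. By linearity of the operator, $D_T^\alpha\big((E_\alpha + c)\ast v\big) = D_T^\alpha(E_\alpha \ast v) + D_T^\alpha(c \ast v)$. The first term equals $v$ by the hypothesis that $E_\alpha$ is a fundamental solution. For the second term, since $c \ast v$ is the constant function $c'$, and by the Remark immediately preceding this section the constant functions lie in the (extended) domain of $D_T^\alpha$ with $D_T^\alpha$ vanishing on them, we get $D_T^\alpha(c \ast v) = D_T^\alpha(c') = 0$. Hence $D_T^\alpha\big((E_\alpha + c)\ast v\big) = v$, so $(E_\alpha + c)\ast v$ is again a solution of \eqref{eq pseudo}, which is exactly what it means for $E_\alpha + c$ to be a fundamental solution.

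The only genuinely delicate point is the justification that $D_T^\alpha$ makes sense on $c \ast v$ and annihilates it: one must check that the constant function $c'$ belongs to the class $\mathfrak{E}_\alpha(\mathbb{Q}_p^n)$ (or whatever extended domain is being used), i.e. that the integral representation $(D_T^\alpha \varphi)(x) = \frac{1-p^\alpha}{1-p^{-\alpha-n}}\int_{\mathbb{Q}_p^n} \|y\|_p^{-\alpha-n}(\varphi(x-y)-\varphi(x))\,dy$ applies; but for a constant function the integrand is identically zero, so the integral converges trivially to $0$. This is precisely the content of the Remark, so there is no real obstacle — the proof is essentially a one-line consequence of linearity plus that Remark. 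I would close by remarking that, in fact, one could equally argue on the Fourier side, since $\mathcal{F}(c) $ is (a multiple of) the Dirac delta supported at $0$, and $\|\xi\|_p^\alpha$ vanishes at $\xi = 0$, so $D_T^\alpha c = 0$ directly; but the integral-representation argument is the cleanest given what has been set up.
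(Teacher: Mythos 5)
Your argument is correct and is essentially the paper's own proof: both distribute the convolution, observe that $c\ast v$ is the constant function $c\int v$, and invoke the Remark that constants lie in the (extended) domain of $D_{T}^{\alpha}$ and are annihilated by it. Your additional justification of why the constant belongs to the extended domain is a welcome (if minor) elaboration of what the paper leaves implicit.
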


\begin{proof}
Let $E_{\alpha}$ a fundamental solution for (\ref{eq pseudo}), then
\begin{align*}
D_{T}^{\alpha}((E_{\alpha}+c)\ast v)  &  =D_{T}^{\alpha}((E_{\alpha}\ast
v)+(c\ast v))\\
&  =u+D_{T}^{\alpha}(c\ast v)=u,
\end{align*}
because $u$ and the constant function, $c\ast v$, are in the domain of
$D_{T}^{\alpha}$.
\end{proof}

\begin{Theorem}
\label{fund solu Tai} A fundamental solution of (\ref{eq pseudo}) is
\[
E_{\alpha}(x)=%
\begin{cases}
\dfrac{1-p^{-\alpha}}{1-p^{\alpha-n}}||x||_{p}^{\alpha-n} & \text{ if
$\alpha\neq n$}\\
\dfrac{1-p^{n}}{p^{n}\ln p}\ln(||x||_{p}) & \text{ if $\alpha=n$.}%
\end{cases}
\]

\end{Theorem}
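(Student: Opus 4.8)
The plan is to compute the fundamental solution by passing to the Fourier side, where the Taibleson operator becomes multiplication by $||\xi||_p^{\alpha}$, and then to invert. Formally, the equation $D_T^{\alpha} u = v$ becomes $||\xi||_p^{\alpha}\, \widehat{u}(\xi) = \widehat{v}(\xi)$, so $\widehat{u} = ||\xi||_p^{-\alpha}\,\widehat{v}$, and hence $u = \mathcal{F}^{-1}(||\xi||_p^{-\alpha}) \ast v$. By Lemma \ref{Fourier trans of Riesz}, $\mathcal{F}(R_s) = ||\xi||_p^{-s}$, equivalently $\mathcal{F}^{-1}(||\xi||_p^{-\alpha}) = R_{\alpha}$ (using that the Fourier transform is essentially an involution on $S'$). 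So I expect the natural fundamental solution to be $E_{\alpha} = R_{\alpha}$, and the job is to rewrite $R_{\alpha}$ explicitly as the function in the statement, bearing in mind that $R_{\alpha}$ is only defined by analytic continuation when $\alpha \in n + \frac{2\pi i}{\ln p}\mathbb{Z}$ — in particular at $\alpha = n$, which is exactly the source of the two cases.

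First I would treat the generic case $\alpha \neq n$. Here $s = \alpha$ is not a pole of the Riesz kernel (as a real number, $\alpha$ avoids the pole locus $n + \frac{2\pi i}{\ln p}\mathbb{Z}$), so $R_{\alpha}$ is the locally integrable function $R_{\alpha}(x) = ||x||_p^{\alpha - n}/\Gamma_p^{(n)}(\alpha)$. Substituting $\Gamma_p^{(n)}(\alpha) = \frac{1 - p^{\alpha - n}}{1 - p^{-\alpha}}$ gives exactly $\frac{1 - p^{-\alpha}}{1 - p^{\alpha - n}}\,||x||_p^{\alpha - n}$, matching the first branch. Then I would verify directly that $E_{\alpha} \ast v$ solves the equation for $v \in S$: apply $D_T^{\alpha}$, use $\mathcal{F}(E_{\alpha} \ast v) = \mathcal{F}(E_{\alpha})\,\mathcal{F}(v) = ||\xi||_p^{-\alpha}\,\widehat{v}(\xi)$, multiply by the symbol $||\xi||_p^{\alpha}$ to recover $\widehat{v}$, and invert. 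One must check that $E_{\alpha} \ast v$ (or at least $E_{\alpha}$ restricted to the relevant behavior) lies in the domain $\mathfrak{E}_{\alpha}$ of $D_T^{\alpha}$ — this is the place where the convolution needs the integrability condition $\int_{||x||_p \geq 1} ||x||_p^{-\alpha - n}|\varphi(x)|\,dx < \infty$; since $v \in S$ has compact support, $E_{\alpha} \ast v$ decays like $||x||_p^{\alpha - n}$ at infinity, and the integral $\int_{||x||_p \geq 1} ||x||_p^{-\alpha - n} ||x||_p^{\alpha - n}\,dx = \int_{||x||_p \geq 1} ||x||_p^{-n}\,dx$ diverges — so one should be careful and instead argue on the Fourier side throughout, or invoke that $D_T^{\alpha}$ acts on distributions via the symbol so that the identity $D_T^{\alpha}(R_{\alpha} \ast v) = v$ holds in $S'$ by the meromorphic-continuation identity $||\xi||_p^{\alpha} \cdot ||\xi||_p^{-\alpha} = 1$ away from $\xi = 0$, together with Lemma \ref{funda + k} to absorb the constant ambiguity at the origin.

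For the critical case $\alpha = n$, $s = n$ is a pole of $R_s$, so $R_n$ itself is not defined; instead I would take the limit $\lim_{s \to n} R_s$ after subtracting the polar part, which by Lemma \ref{funda + k} only changes the fundamental solution by a constant and is therefore harmless. Concretely, write $s = n + \varepsilon$ and expand: $R_{n+\varepsilon}(x) = \frac{||x||_p^{\varepsilon}}{\Gamma_p^{(n)}(n+\varepsilon)}$ with $\Gamma_p^{(n)}(n+\varepsilon) = \frac{1 - p^{\varepsilon}}{1 - p^{-n-\varepsilon}}$, whose numerator vanishes linearly at $\varepsilon = 0$ with $1 - p^{\varepsilon} = -\varepsilon \ln p + O(\varepsilon^2)$, while $||x||_p^{\varepsilon} = 1 + \varepsilon \ln||x||_p + O(\varepsilon^2)$ for fixed $x$. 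Thus $R_{n+\varepsilon}(x) = \frac{1 - p^{-n-\varepsilon}}{1 - p^{\varepsilon}}\big(1 + \varepsilon\ln||x||_p + \cdots\big)$; the leading term is a constant in $x$ (a $\frac{1}{\varepsilon}$-singular multiple of $1$, which we discard by Lemma \ref{funda + k}), and the next term gives $\frac{1 - p^{-n}}{-\ln p}\ln||x||_p = \frac{p^{-n}-1}{\ln p}\ln||x||_p$; after clearing $p^{-n}$ this is $\frac{1-p^n}{p^n \ln p}\ln||x||_p$, matching the second branch. The main obstacle I anticipate is handling these limiting/regularization arguments cleanly at the level of distributions — making precise that the divergent constant really is killed by Lemma \ref{funda + k}, and that the distributional identity $D_T^{n}(E_n \ast v) = v$ survives the limit $s \to n$ — rather than any of the elementary algebra, which is routine once the framework is set up.
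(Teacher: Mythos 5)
Your proposal is correct and follows essentially the same route as the paper: pass to the Fourier side, identify $\mathcal{F}E_{\alpha}$ with the regularized distribution $||\xi||_{p}^{-\alpha}$ (the value, resp.\ the Laurent constant term at $\alpha=n$, of the meromorphic family $||\xi||_{p}^{s}$ at $s=-\alpha$), use Lemma \ref{Fourier trans of Riesz} to recognize this as the Riesz kernel $R_{\alpha}$, and absorb the divergent constant at $\alpha=n$ via Lemma \ref{funda + k}. One small slip in your aside: $||x||_{p}^{-\alpha-n}\cdot||x||_{p}^{\alpha-n}=||x||_{p}^{-2n}$, not $||x||_{p}^{-n}$, so the integral over $||x||_{p}\geq1$ in fact converges and the domain worry you raise there is moot (though your decision to argue on the Fourier side anyway is exactly what the paper does).
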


\begin{proof}
The proof is based on the ideas introduced in \cite{Z-G1}. The existence of a
fundamental solution $E_{\alpha}$ is equivalent to the existence of a
distribution $\mathcal{F}E_{\alpha}$ satisfying
\begin{equation}
||x||_{p}^{\alpha}\mathcal{F}E_{\alpha}=1, \label{eq in distr}%
\end{equation}
as distributions. Let $||x||_{p}^{s}=\sum\limits_{m\in\mathbb{Z}}%
c_{m}(s+\alpha)^{m}$ be the Laurent expansion at $-\alpha$ with $c_{m}\in
S^{\prime}$ for all $m$. The existence of this expansion is a consequence of
the completeness of $S^{\prime}$ (see e.g. \cite[pp. 65-66]{Igusa}). Since the
real parts of the poles of the meromorphic continuation of $||x||_{p}^{s}$ are
negative rational numbers (cf. Remark \ref{continuation of |x|}),
$||x||_{p}^{s+\alpha}=||x||_{p}^{\alpha}||x||_{p}^{s}$ is holomorphic at
$s=-\alpha$. Therefore, $||x||_{p}^{\alpha}c_{m}=0$ for all $m<0$ and
\[
||x||_{p}^{s+\alpha}=||x||_{p}^{\alpha}c_{0}+\sum_{m=1}^{\infty}%
||x||_{p}^{\alpha}c_{m}(s+\alpha)^{m}.
\]

By using the Lebesgue dominated convergence theorem, one verifies that
\[
\underset{s\rightarrow-\alpha}{\text{lim}}\langle||x||_{p}^{s+\alpha}%
,\phi\rangle=\int_{\mathbb{Q}_{p}^{n}}\phi(x)\,dx=\langle1,\phi\rangle,
\]
and then we can take $\mathcal{F}E_{\alpha}=c_{0}$. Furthermore, if $-\alpha$
is not a pole of $||x||_{p}^{s}$,
\begin{equation}
\mathcal{F}E_{\alpha}=\underset{s\rightarrow-\alpha}{\text{lim}}||x||_{p}^{s}.
\label{F E_a limit}%
\end{equation}

To calculate $c_{0}$, consider the following two cases.

\textbf{Case $\mathbf{\alpha\neq n}$}.

We use (\ref{F E_a limit}) and the Lemma \ref{Fourier trans of Riesz}, i.e.,
\[
\int_{(\mathbb{Q}_{p}^{\times})^{n}} ||x||_{p}^{s} \mathcal{F}(\varphi)(x)\,
dx = \dfrac{1-p^{s}}{1-p^{-s-n}}\int_{(\mathbb{Q}_{p}^{\times})^{n}}
||x||_{p}^{-s-n} \varphi(x)\, dx,
\]
for $s\neq n+(2\pi i/\ln p) \mathbb{Z}$. If $\alpha\neq n$, by
(\ref{F E_a limit}),
\[
\langle E_{\alpha},\mathcal{F}(\varphi) \rangle=\underset{s \rightarrow
-\alpha}{\text{lim}}\int_{(\mathbb{Q}_{p}^{\times})^{n}} ||x||_{p}^{s}
\mathcal{F}(\varphi)(x)\, dx.
\]

If $\alpha>n$, by the Lebesgue dominated convergence theorem, we can
interchange the limit and the integral. If $0<\alpha<n$, by taking into
account that
\[
\int_{||x||_{p}\leq1}||x||_{p}^{\alpha-n}\,dx<+\infty,\quad\text{for }%
0<\alpha<n,
\]
and by using Lebesgue dominated convergence theorem, we can exchange the limit
and the integral. Therefore,
\begin{align*}
\langle E_{\alpha},\varphi\rangle &  =\frac{1-p^{-\alpha}}{1-p^{\alpha-n}}%
\int_{(\mathbb{Q}_{p}^{\times})^{n}}||x||_{p}^{\alpha-n}\varphi(x)\,dx\\
&  =\frac{1-p^{-\alpha}}{1-p^{\alpha-n}}\int_{\mathbb{Q}_{p}^{n}}%
||x||_{p}^{\alpha-n}\varphi(x)\,dx.
\end{align*}
Set $\overset{\sim}{\varphi}(x)=\varphi(-x)$, with $\varphi\in S$. The results
follow by replacing $\varphi$ by $\mathcal{F}(\overset{\sim}{\varphi})$
because $\mathcal{F}\bigl(\mathcal{F}(\overset{\sim}{\varphi})\bigr)=\varphi$.

\textbf{Case $\mathbf{\alpha= n}$}.

We compute the constant term, $c_{0}$, in the expansion
\[
\langle||x||_{p}^{s},\mathcal{F}(\varphi)\rangle=\sum_{m\in\mathbb{Z}}\langle
c_{m},\mathcal{F}(\varphi)\rangle(s+n)^{m}.
\]
Since
\begin{align*}
\langle||x||_{p}^{s},\mathcal{F}(\varphi)\rangle &  =\frac{1-p^{s}}%
{1-p^{-s-n}}\int_{\mathbb{Q}_{p}^{n}}||x||_{p}^{-s-n}\varphi(x)\,dx\\
&  =(1-p^{s})\int_{\mathbb{Q}_{p}^{n}}\frac{p^{v(x)(s+n)}}{1-p^{-s-n}}%
\varphi(x)\,dx,
\end{align*}
where $x=(x_{1},\dotsc,x_{n})$, $v(x):=\underset{1\leq i\leq n}{\min}v(x_{i}%
)$, and $||x||_{p}=p^{-v(x)}$, by expanding
\begin{align*}
\dfrac{(1-p^{s})p^{v(x)(s+n)}}{1-p^{-s-n}}  &  =\dfrac{1-p^{-n}}{\ln
p}(s+n)^{-1}\\
&  +\dfrac{(1-p^{-1})v(x)\ln p-\frac{\ln p}{p}+\frac{(p-1)}{2p}\ln p}{\ln
p}+O(\left(  s+n\right)  ),
\end{align*}
one gets
\[
\langle E_{n},\varphi\rangle=\langle c_{0},\varphi\rangle=\int_{\mathbb{Q}%
_{p}^{n}}\Bigl(\frac{1-p^{n}}{p^{n}\ln(p)}\ln(||x||_{p})+\frac{p^{n}-3}%
{2p^{n}}\Bigr)\varphi(x)\,dx.
\]

The announced results follow by replacing $\varphi$ by $\mathcal{F}%
(\overset{\sim}{\varphi})$, $\varphi\in S$, and using the fact that the
fundamental solution is determined up to the addition of a constant (cf. Lemma
\ref{funda + k}).
\end{proof}

In the case $n=1$, the previous result is already known, see e.g.
\cite[Theorem 2.1]{KO4}.

\section{Fundamental Solutions for Elliptic Operators}

\begin{Theorem}
\label{fund solu ellip} Let $f(D,\alpha)$ be an elliptic operator of order
$d$. Then, a fundamental solution $E_{\alpha}$ of $f(D,\alpha)u=v$, $\alpha
>0$, and $v\in\mathcal{W}$, is given by
\[
E_{\alpha}(x)=%
\begin{cases}
\dfrac{L(p^{\alpha})(1-p^{-d\alpha})}{(1-p^{-n})(1-p^{d\alpha-n})}%
||x||_{p}^{d\alpha-n} & \text{ as a distribution on $\mathcal{W}$, with
$\alpha\neq n/d$}\\
\, & \\
\dfrac{L(p^{n/d})(1-p^{n})}{(1-p^{-n})(p^{n}\ln p)}\ln(||x||_{p}) & \text{ as
a distribution on $\mathcal{W}$, with $\alpha=n/d$},
\end{cases}
\]
where $L(p^{-s})$ is the numerator of $Z(s,f)$.
\end{Theorem}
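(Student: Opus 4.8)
The plan is to reduce the elliptic case to the Taibleson case already solved in Theorem \ref{fund solu Tai}, using the key structural fact from Lemma \ref{elliptic = Taibleson} that on the space $\mathcal{W}$ the distribution $|f|_p^s$ is a scalar multiple of the Riesz kernel $\mathit{R}_{ds+n}$, hence (via Lemma \ref{Fourier trans of Riesz}) the symbol $|f(\xi)|_p^{\alpha}$ behaves, up to an explicit constant, like $||\xi||_p^{d\alpha}$ when acting on $\mathcal{F}(\mathcal{W})$. Concretely, for $v\in\mathcal{W}$ the existence of a fundamental solution reduces to finding $\mathcal{F}E_{\alpha}$ with $|f(\xi)|_p^{\alpha}\,\mathcal{F}E_{\alpha}=1$ as distributions on $\mathcal{F}(\mathcal{W})$; by Lemma \ref{elliptic = Taibleson} evaluated at $s=\alpha$ this is equivalent to
\[
\frac{(1-p^{d\alpha})L(p^{-\alpha})}{(1-p^{-n})(1-p^{-d\alpha-n})}\,\langle \mathit{R}_{d\alpha+n},\mathcal{F}E_{\alpha}\rangle=1,
\]
so up to the indicated rational constant the problem is exactly the equation $||\xi||_p^{d\alpha}\mathcal{F}E_{\alpha}=\text{const}$ on $\mathcal{F}(\mathcal{W})$, which is the Fourier-side equation \eqref{eq in distr} for the Taibleson operator with exponent $\alpha'=d\alpha$.

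The steps I would carry out are: (1) Fix $v\in\mathcal{W}$ and write $\widehat v=\mathcal{F}v$; note $\mathcal{F}$ is a $\mathbb{C}$-isomorphism on $\mathcal{W}$, so it suffices to produce $E_{\alpha}\in S'$ solving the symbol equation against test functions of the form $\mathcal{F}\varphi$, $\varphi\in\mathcal{W}$. (2) Apply Lemma \ref{elliptic = Taibleson} to replace $|f|_p^{\alpha}$ by $\frac{(1-p^{d\alpha})L(p^{-\alpha})}{(1-p^{-n})(1-p^{-d\alpha-n})}\mathit{R}_{d\alpha+n}$ on $\mathcal{W}$, and then Lemma \ref{Fourier trans of Riesz} to write $\mathcal{F}\mathit{R}_{d\alpha+n}=||\xi||_p^{-d\alpha-n}$... but more directly: transcribe the computation in the proof of Theorem \ref{fund solu Tai} verbatim with $\alpha$ replaced by $d\alpha$, obtaining that $\mathcal{F}E_{d\alpha}^{\mathrm{Tai}}$ solves $||\xi||_p^{d\alpha}\mathcal{F}E=1$, and $E_{d\alpha}^{\mathrm{Tai}}(x)$ equals $\frac{1-p^{-d\alpha}}{1-p^{d\alpha-n}}||x||_p^{d\alpha-n}$ when $d\alpha\neq n$ and $\frac{1-p^n}{p^n\ln p}\ln||x||_p$ when $d\alpha=n$. (3) Track the constant: since on $\mathcal{W}$ we have $|f|_p^{\alpha}=\lambda(\alpha)\,||\cdot||_p^{d\alpha}$ after passing through the Riesz kernel, where the clean bookkeeping gives the multiplier $\frac{L(p^{\alpha})(1-p^{-d\alpha})}{(1-p^{-n})(1-p^{d\alpha-n})}\big/\frac{1-p^{-d\alpha}}{1-p^{d\alpha-n}}=\frac{L(p^{\alpha})}{1-p^{-n}}$ relative to the Taibleson fundamental solution, one divides $E_{d\alpha}^{\mathrm{Tai}}$ by the appropriate scalar to absorb $\lambda(\alpha)$; this yields exactly the two displayed formulas, including the substitution $\alpha\mapsto n/d$ inside $L$ in the logarithmic case. (4) Invoke Lemma \ref{funda + k}'s analogue — uniqueness up to an additive constant — to justify that the logarithmic-case answer is a valid representative.

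The main obstacle is the constant bookkeeping in step (3): one must be careful that Lemma \ref{elliptic = Taibleson} is applied with the correct value of $s$ (namely $s=\alpha$, giving the Riesz kernel $\mathit{R}_{d\alpha+n}$ and multiplier involving $L(p^{-\alpha})$), while the Taibleson fundamental solution in Theorem \ref{fund solu Tai} uses the Riesz-kernel normalization $\Gamma_p^{(n)}$, so the constants $\frac{1-p^{-n}}{1-p^{d\alpha-n}}$, $\frac{1-p^{-d\alpha}}{1-p^{d\alpha-n}}$ and $L(p^{\pm\alpha})$ must be combined without sign or inversion errors; in particular one should double-check that $L(p^{-\alpha})$ versus $L(p^{\alpha})$ appears correctly (the discrepancy comes from the reflection $s\leftrightarrow -s-n$ implicit in passing between $|f|_p^s$ and its Fourier transform). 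A secondary subtlety is that $\mathcal{W}$ is not closed under the action and one only claims the identity \emph{as distributions on $\mathcal{W}$}; since every step (Lemmas \ref{elliptic = Taibleson}, \ref{Fourier trans of Riesz}, and the $\mathcal{F}$-invariance of $\mathcal{W}$) is stated in exactly that category, no difficulty arises there, but it must be stated explicitly. Once these are handled, the result follows by direct substitution.
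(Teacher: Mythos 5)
Your overall route is the paper's: reduce to the symbol equation, use Lemma \ref{elliptic = Taibleson} to identify $|f|_{p}^{s}$ on $\mathcal{W}$ with a multiple of the Riesz kernel, and then repeat the argument of Theorem \ref{fund solu Tai} with exponent $d\alpha$. However, the constant bookkeeping in your step (3) --- the step you yourself single out as the main obstacle --- is carried out incorrectly, and the $L(p^{\alpha})$ versus $L(p^{-\alpha})$ issue you flag is not a harmless normalization to ``double-check'': it is the crux. Applying Lemma \ref{elliptic = Taibleson} at $s=\alpha$ gives $|f|_{p}^{\alpha}=\lambda(\alpha)\,||\cdot||_{p}^{d\alpha}$ on $\mathcal{W}$ with $\lambda(\alpha)=L(p^{-\alpha})/(1-p^{-n})$, and ``dividing $E_{d\alpha}^{\mathrm{Tai}}$ by the scalar that absorbs $\lambda(\alpha)$'' produces the constant $(1-p^{-n})/L(p^{-\alpha})$, whereas the theorem's constant is $L(p^{\alpha})/(1-p^{-n})$. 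These agree only when $L(p^{\alpha})L(p^{-\alpha})=(1-p^{-n})^{2}$, which fails in general: for $f(\xi)=\xi_{1}^{2}-p\xi_{2}^{2}$ in $\mathbb{Q}_{p}^{2}$ one computes $L(t)=(1-p^{-1})(1+p^{-1}t)$, and the required identity forces $p^{\alpha}+p^{-\alpha}=2$, i.e.\ $\alpha=0$.

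The reason your mechanism fails is that restriction to $\mathcal{W}$ does not commute with multiplicative inversion of distributions: knowing how $|f|_{p}^{\alpha}$ acts on test functions in $\mathcal{W}$ does not let you invert it there, because in the product $|f|_{p}^{\alpha}\mathcal{F}E_{\alpha}$ the symbol is paired against functions that leave $\mathcal{W}$ (indeed $|f(\xi)|_{p}^{\alpha}$ is not a radial function, so no radial $\mathcal{F}E_{\alpha}$ can satisfy $|f|_{p}^{\alpha}\mathcal{F}E_{\alpha}=1$ pointwise). What the paper does, in saying ``reasoning as in the proof of Theorem \ref{fund solu Tai}'', is take $\mathcal{F}E_{\alpha}$ to be the constant term of the Laurent expansion of the meromorphic family $|f|_{p}^{s}$ at $s=-\alpha$, i.e.\ the Atiyah-type regularization of $|f|_{p}^{-\alpha}$; restricting \emph{that} to $\mathcal{W}$ means evaluating Lemma \ref{elliptic = Taibleson} at $s=-\alpha$, where $\frac{L(p^{-s})}{1-p^{-n}}||x||_{p}^{ds}$ becomes $\frac{L(p^{\alpha})}{1-p^{-n}}||x||_{p}^{-d\alpha}$, whence the stated constant (and, for $\alpha=n/d$, the constant term of the expansion at the pole, yielding the logarithmic formula). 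Your step (2), read literally (``transcribe the proof of Theorem \ref{fund solu Tai}''), would lead to this correct computation, but it contradicts your step (3); as written the proposal does not determine the constant, and the heuristic it offers determines it wrongly.
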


\begin{proof}
As we mention before, the problem of the existence of a fundamental solution,
$E_{\alpha}$, is equivalent to the existence of a distribution $\mathcal{F}
E_{\alpha}$ satisfying
\[
|f|_{p}^{\alpha}\mathcal{F}E_{\alpha}=1 \text{ in }S^{\prime}.
\]

\noindent By Lemma \ref{elliptic = Taibleson},
\[
\langle|f|_{p}^{\alpha},\varphi\rangle=\langle\frac{(1-p^{d\alpha
})L(p^{-\alpha})}{(1-p^{-n})(1-p^{-d\alpha-n})}\mathit{R}_{d\alpha+n}%
,\varphi\rangle
\]
$\varphi\in\mathcal{W}$, $s\in\mathbb{C}$. The result follows by reasoning as
in the proof of Theorem \ref{fund solu Tai}, and by the fact that the space
$\mathcal{W}$ is invariant under the Fourier transform.
\end{proof}

\begin{Corollary}
\label{cor1}With the hypotheses of the previous theorem, and assuming that
$\alpha\neq n/d$, we have
\[
|\mathcal{F}(E_{\alpha}\ast\varphi)(x)|\leq C(\alpha)||x||_{p}^{-d\alpha
}|\mathcal{F}(\varphi)(x)|,
\]
for all $x\in\mathbb{Q}_{p}^{n}$, and $\varphi\in\mathcal{W}$.
\end{Corollary}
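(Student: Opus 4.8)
The plan is to read off the bound directly from the explicit formula for $E_{\alpha}$ in Theorem \ref{fund solu ellip} together with the behaviour of the Fourier transform on the space $\mathcal{W}$. Since $\varphi\in\mathcal{W}$, we have $E_{\alpha}\ast\varphi\in\mathcal{W}$ as well (the convolution of a $\mathcal{W}$-distribution of the stated form with a linear combination of the $\chi_{r}$ is again such a combination), so $\mathcal{F}(E_{\alpha}\ast\varphi)$ makes sense pointwise as an element of $\mathcal{W}$. The key identity is the convolution theorem: $\mathcal{F}(E_{\alpha}\ast\varphi)=\mathcal{F}(E_{\alpha})\cdot\mathcal{F}(\varphi)$, valid on $\mathcal{W}$, and by construction $\mathcal{F}(E_{\alpha})$ is the distribution (again identified with a locally constant function on $\mathbb{Q}_{p}^{n}\smallsetminus\{0\}$) solving $|f|_{p}^{\alpha}\,\mathcal{F}(E_{\alpha})=1$. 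Hence on the support of $\mathcal{F}(\varphi)$, which is bounded away from $0$, we have $\mathcal{F}(E_{\alpha})(x)=|f(x)|_{p}^{-\alpha}$.

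First I would make the identification $\mathcal{F}(E_{\alpha})(x)=|f(x)|_{p}^{-\alpha}$ precise: from the proof of Theorem \ref{fund solu ellip}, $\mathcal{F}(E_{\alpha})$ is obtained as the constant term (or the limit, when $-\alpha$ is not a pole) of $|f|_{p}^{s}$ as $s\to-\alpha$, acting on $\mathcal{W}$, and for $\varphi\in\mathcal{W}$ with $0\notin\operatorname{supp}\mathcal{F}(\varphi)$ this reduces to the ordinary pointwise value $|f(x)|_{p}^{-\alpha}$ on that support. Second, I would invoke Lemma \ref{Lemma 1}: there is $C_{0}(f)>0$ with $|f(x)|_{p}\geq C_{0}(f)\,||x||_{p}^{d}$, hence $|f(x)|_{p}^{-\alpha}\leq C_{0}(f)^{-\alpha}\,||x||_{p}^{-d\alpha}$ for all $x\neq 0$. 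Third, combining these with the convolution identity gives
\[
|\mathcal{F}(E_{\alpha}\ast\varphi)(x)|=|f(x)|_{p}^{-\alpha}\,|\mathcal{F}(\varphi)(x)|\leq C_{0}(f)^{-\alpha}\,||x||_{p}^{-d\alpha}\,|\mathcal{F}(\varphi)(x)|
\]
for every $x$ in the support of $\mathcal{F}(\varphi)$, and trivially (both sides vanish) elsewhere; setting $C(\alpha):=C_{0}(f)^{-\alpha}$ finishes the proof.

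I do not expect a serious obstacle here; the only point requiring a little care is the first step, namely justifying that the distributional equation $|f|_{p}^{\alpha}\mathcal{F}(E_{\alpha})=1$ on $\mathcal{W}$ really does let us replace $\mathcal{F}(E_{\alpha})$ by the function $|f(x)|_{p}^{-\alpha}$ when pairing against functions whose Fourier transform is supported away from the origin. This is where the hypothesis $\alpha\neq n/d$ enters: it guarantees $-\alpha$ is not at a pole position $n/d+(2\pi i/\ln p)\mathbb{Z}$-shifted locus of the relevant zeta factor, so that $\mathcal{F}(E_{\alpha})=\lim_{s\to-\alpha}|f|_{p}^{s}$ holds genuinely and no extra additive-constant ambiguity intervenes on $\mathcal{W}$. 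Once that identification is in hand, the estimate is an immediate consequence of Lemma \ref{Lemma 1} and the convolution theorem, with the constant $C(\alpha)$ depending only on $\alpha$ and the constant $C_{0}(f)$ from the ellipticity of $f$.
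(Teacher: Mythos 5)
Your overall strategy (convolution theorem plus a pointwise identification of $\mathcal{F}(E_{\alpha})$, then an elementary bound) is in the right spirit, but the step you yourself flag as the only delicate one contains a concrete error. You assert that for $\varphi\in\mathcal{W}$ the support of $\mathcal{F}(\varphi)$ is bounded away from $0$. It is exactly the opposite: $\mathcal{W}$ is spanned by the $\chi_{r}$ and $\mathcal{F}(\chi_{r})=p^{-nr}\chi_{-r}$, so $\mathcal{F}(\varphi)$ is a finite linear combination of indicator functions of balls \emph{centred at the origin}; its support is a full neighbourhood of $0$ whenever $\varphi\neq0$. (You may be thinking of the Lizorkin-type space $\mathcal{L}$, where $\mathcal{F}(\varphi)(0)=\int\varphi=0$.) This matters because the whole point of the delicacy is the behaviour of $\mathcal{F}(E_{\alpha})$ near the origin, and your argument dismisses precisely that region. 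A second, related inaccuracy: even on $\mathbb{Q}_{p}^{n}\smallsetminus\{0\}$, the identification $\mathcal{F}(E_{\alpha})(\xi)=|f(\xi)|_{p}^{-\alpha}$ is not correct for the specific $E_{\alpha}$ of Theorem \ref{fund solu ellip}. That $E_{\alpha}$ is an explicit multiple of $||x||_{p}^{d\alpha-n}$, whose Fourier transform is a multiple of $||\xi||_{p}^{-d\alpha}$; the functions $C||\xi||_{p}^{-d\alpha}$ and $|f(\xi)|_{p}^{-\alpha}$ are genuinely different pointwise (Lemma \ref{Lemma 1} gives only two-sided comparison), and they agree only as distributions on $\mathcal{W}$, i.e.\ after integration over balls centred at $0$, by the homogeneity of $f$. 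Your claim that $E_{\alpha}\ast\varphi$ again lies in $\mathcal{W}$ is also false ($E_{\alpha}\ast\chi_{r}$ is radial and locally constant but decays like $||x||_{p}^{d\alpha-n}$, so it is not a finite combination of the $\chi_{r'}$), though this is not load-bearing.

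The intended argument is shorter and avoids all of this: by Theorem \ref{fund solu ellip}, $E_{\alpha}$ is an explicit constant times $||x||_{p}^{d\alpha-n}=\Gamma_{p}^{(n)}(d\alpha)\,\mathit{R}_{d\alpha}(x)$, so Lemma \ref{Fourier trans of Riesz} gives $\mathcal{F}(E_{\alpha})=C(\alpha)\,||\xi||_{p}^{-d\alpha}$ (the hypothesis $\alpha\neq n/d$ ensures $d\alpha\notin n+\frac{2\pi i}{\ln p}\mathbb{Z}$ and that one is in the power-function rather than the logarithmic case), whence
\[
\mathcal{F}(E_{\alpha}\ast\varphi)(\xi)=C(\alpha)\,||\xi||_{p}^{-d\alpha}\,\mathcal{F}(\varphi)(\xi)
\]
for $\xi\neq0$, which is the corollary with equality; this equality is in fact what is used in the proof of Theorem \ref{mainresult}. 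No appeal to Lemma \ref{Lemma 1} is needed. Your route does yield a true inequality in the end (since $|f(\xi)|_{p}^{-\alpha}\leq C_{0}(f)^{-\alpha}||\xi||_{p}^{-d\alpha}$), but only after replacing the false support claim by a correct treatment of $\mathcal{F}(E_{\alpha})$ on a punctured neighbourhood of the origin, so as written the proof has a genuine gap.
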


\section{Solutions of Elliptic Pseudo-Differential Equations in Sobolev
Spaces}

Given $\phi\in\mathcal{S}$ and $l$ a non-negative number, we define
\[
||\phi||_{H^{l}}^{2}=\int_{\mathbb{Q}_{p}^{n}}[\text{max}(1,||\xi||_{p}%
)]^{2l}|\mathcal{F}(\phi)(\xi)|^{2}\,d\xi.
\]

We call the completion of $\mathcal{S}$ with respect to $||\cdot||_{H^{l}}$
the \textit{$l$-Sobolev space } $H^{l}:=H^{l}(\mathbb{Q}_{p}^{n})$.

We note that $H^{l}$ contains properly the space of test functions, $S$.
Indeed, consider the function
\[
f(x)=
\begin{cases}
0 & \text{ if }||x||_{p}\leq1\\
||x||_{p}^{-\beta} & \text{ if }||x||_{p}>1
\end{cases}
\]
with $\beta>n$. A direct calculation shows that
\[
||f||_{H^{l}}^{2}=\int\limits_{||\xi||_{p}\leq1}\Bigl|\frac{(1-p^{-n}%
)(1-||\xi||_{p}^{\beta-n}p^{n-\beta})}{(1-p^{n-\beta})}-p^{-\beta}||\xi
||_{p}^{\beta-n}\Bigr|^{2}\,d\xi.
\]
Thus, $||f||_{H^{l}}^{2}<\infty$, but $f$ does not have compact support.

\begin{Lemma}
\label{Sobolev embedding Lemma} If $l>n/2$, then there exists an embedding of
$H^{l}$ into the space of uniformly continuous functions.
\end{Lemma}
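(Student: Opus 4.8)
The plan is to establish the embedding $H^{l}\hookrightarrow C_{u}(\mathbb{Q}_{p}^{n})$ (uniformly continuous functions) by showing that for $\phi\in S$ the sup-norm is controlled by $\|\phi\|_{H^{l}}$, and then extending by density. The key analytic input is the Fourier inversion formula together with the Cauchy--Schwarz inequality applied with the weight $[\max(1,\|\xi\|_{p})]^{l}$.

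First I would note that for $\phi\in S$ we have $\phi(x)=\int_{\mathbb{Q}_{p}^{n}}\Psi(x\cdot\xi)\,\mathcal{F}(\phi)(\xi)\,d\xi$, so that
\[
|\phi(x)|\leq\int_{\mathbb{Q}_{p}^{n}}|\mathcal{F}(\phi)(\xi)|\,d\xi
=\int_{\mathbb{Q}_{p}^{n}}[\max(1,\|\xi\|_{p})]^{-l}\,[\max(1,\|\xi\|_{p})]^{l}|\mathcal{F}(\phi)(\xi)|\,d\xi.
\]
By Cauchy--Schwarz this is at most $\left(\int_{\mathbb{Q}_{p}^{n}}[\max(1,\|\xi\|_{p})]^{-2l}\,d\xi\right)^{1/2}\|\phi\|_{H^{l}}$. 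The second step is to check that the first integral is finite precisely when $2l>n$: split it over $\|\xi\|_{p}\leq 1$ (where the integrand is $1$, giving measure $1$) and over $\|\xi\|_{p}=p^{j}$, $j\geq 1$, where it equals $p^{-2lj}$ times the measure $p^{jn}(1-p^{-n})$ of the corresponding sphere; the resulting geometric series $\sum_{j\geq 1}p^{j(n-2l)}$ converges iff $n-2l<0$, i.e. $l>n/2$. Thus $\|\phi\|_{\infty}\leq C(l,n)\,\|\phi\|_{H^{l}}$ for all $\phi\in S$, with $C(l,n)^{2}=1+(1-p^{-n})\frac{p^{n-2l}}{1-p^{n-2l}}$.

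Next I would upgrade this bound from boundedness to uniform continuity. For $\phi\in S$ and $x'\in\mathbb{Q}_{p}^{n}$, inversion gives $\phi(x+x')-\phi(x)=\int \Psi(x\cdot\xi)(\Psi(x'\cdot\xi)-1)\,\mathcal{F}(\phi)(\xi)\,d\xi$, and $\Psi(x'\cdot\xi)-1=0$ whenever $\|x'\|_{p}\|\xi\|_{p}\leq 1$. Hence, given $\varepsilon>0$, choose $N$ so large that the tail $\left(\int_{\|\xi\|_{p}>p^{N}}[\max(1,\|\xi\|_{p})]^{-2l}\,d\xi\right)^{1/2}\|\phi\|_{H^{l}}<\varepsilon/2$ (possible since the integral converges); then for $\|x'\|_{p}\leq p^{-N}$ we have $\Psi(x'\cdot\xi)=1$ on $\|\xi\|_{p}\leq p^{N}$, so the same Cauchy--Schwarz estimate applied to the tail shows $|\phi(x+x')-\phi(x)|<\varepsilon$ uniformly in $x$. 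Finally, the density argument: the inequality $\|\phi\|_{\infty}\leq C(l,n)\|\phi\|_{H^{l}}$ shows that the identity map $S\to C_{u}(\mathbb{Q}_{p}^{n})$ extends continuously to $H^{l}$; any Cauchy sequence in $H^{l}$ is then Cauchy in the sup-norm, its uniform limit is uniformly continuous, and one checks this limit is independent of the approximating sequence, so the extension is a well-defined bounded injection (injectivity being clear since a function that is zero a.e. and uniformly continuous is identically zero).

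The step I expect to require the most care is making the ``density/extension'' argument rigorous, i.e. verifying that an element of the abstract completion $H^{l}$ really is represented by a genuine uniformly continuous function and that this representation is canonical; the purely computational part (evaluating $\int[\max(1,\|\xi\|_{p})]^{-2l}\,d\xi$ as a geometric series) is routine. One must also be slightly careful that the a priori bound is stated for $\phi\in S$ only, and invoke it on differences $\phi_{j}-\phi_{k}$ of an approximating sequence before passing to the limit.
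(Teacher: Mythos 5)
Your proposal is correct and follows essentially the same route as the paper: both arguments reduce to showing $\|\mathcal{F}(\phi)\|_{L^{1}}\leq C\|\phi\|_{H^{l}}$ via Cauchy--Schwarz with the weight $[\max(1,\|\xi\|_{p})]^{l}$ and the convergence of $\int[\max(1,\|\xi\|_{p})]^{-2l}\,d\xi$ for $l>n/2$. The paper simply cites the fact that Fourier transforms of $L^{1}$ functions are uniformly continuous where you prove it directly from the local constancy of $\Psi$, and your more careful treatment of the density/extension step is a welcome (if standard) elaboration of what the paper leaves implicit.
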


\begin{proof}
Let $\phi\in H^{l}$. Since the Fourier transform of a function in $L^{1}$ is
uniformly continuous, it is \ sufficient to show that $\mathcal{F}(\phi)\in
L^{1}$. By using the H\"{o}lder inequality and the fact that%
\[
\int_{\mathbb{Q}_{p}^{n}}(\text{max}(1,||\xi||_{p}))^{-2l}\,d\xi
<+\infty\text{, for }l>n/2\text{,}%
\]
we have%
\[
\int_{\mathbb{Q}_{p}^{n}}|\mathcal{F}(\phi)(\xi)|\,d\xi=\int_{\mathbb{Q}%
_{p}^{n}}\frac{(\text{max}(1,||\xi||_{p}))^{l}}{(\text{max}(1,||\xi
||_{p}))^{l}}|\mathcal{F}(\phi)(\xi)|\,d\xi\leq C||\phi||_{H^{l}}.
\]

\end{proof}

\begin{Lemma}
\label{Lemma 6} For any $\alpha>0$ and $l\geq0$, the mapping $f(D,\alpha
):H^{l+d\alpha}\rightarrow H^{l}$ is a well-defined continuous mapping between
Banach spaces.
\end{Lemma}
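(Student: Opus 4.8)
The plan is to show that $f(D,\alpha)$ maps $H^{l+d\alpha}$ into $H^{l}$ with a uniform bound on the norm, and then extend by density. First I would recall the definition: for $\phi\in S$ we have $\mathcal{F}(f(D,\alpha)\phi)(\xi)=|f(\xi)|_{p}^{\alpha}\mathcal{F}(\phi)(\xi)$, so that
\[
\|f(D,\alpha)\phi\|_{H^{l}}^{2}=\int_{\mathbb{Q}_{p}^{n}}[\max(1,\|\xi\|_{p})]^{2l}\,|f(\xi)|_{p}^{2\alpha}\,|\mathcal{F}(\phi)(\xi)|^{2}\,d\xi.
\]
The key point is the pointwise symbol estimate: by Lemma \ref{Lemma 1} there is a constant $C_{1}=C_{1}(f)$ with $|f(\xi)|_{p}\le C_{1}\|\xi\|_{p}^{d}$ for all $\xi$, hence $|f(\xi)|_{p}^{2\alpha}\le C_{1}^{2\alpha}\|\xi\|_{p}^{2d\alpha}$.

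Next I would compare $[\max(1,\|\xi\|_{p})]^{2l}\,\|\xi\|_{p}^{2d\alpha}$ with $[\max(1,\|\xi\|_{p})]^{2(l+d\alpha)}$. Splitting into the regions $\|\xi\|_{p}\le 1$ and $\|\xi\|_{p}>1$: on $\|\xi\|_{p}\le 1$ we have $\|\xi\|_{p}^{2d\alpha}\le 1=[\max(1,\|\xi\|_{p})]^{2d\alpha}$ (here one uses $\alpha>0$, $d\ge 1$), while on $\|\xi\|_{p}>1$ one has $\|\xi\|_{p}^{2d\alpha}=[\max(1,\|\xi\|_{p})]^{2d\alpha}$ exactly. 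In either case
\[
[\max(1,\|\xi\|_{p})]^{2l}\,\|\xi\|_{p}^{2d\alpha}\le[\max(1,\|\xi\|_{p})]^{2(l+d\alpha)}.
\]
Combining with the symbol bound gives
\[
[\max(1,\|\xi\|_{p})]^{2l}\,|f(\xi)|_{p}^{2\alpha}\le C_{1}^{2\alpha}\,[\max(1,\|\xi\|_{p})]^{2(l+d\alpha)}
\]
for all $\xi$, and therefore $\|f(D,\alpha)\phi\|_{H^{l}}^{2}\le C_{1}^{2\alpha}\|\phi\|_{H^{l+d\alpha}}^{2}$ for every $\phi\in S$. Thus $f(D,\alpha)$ is a bounded linear operator on the dense subspace $S\subset H^{l+d\alpha}$ taking values in $H^{l}$, and it extends uniquely to a continuous linear map $H^{l+d\alpha}\to H^{l}$ with operator norm at most $C_{1}^{\alpha}$. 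To see that this extension agrees with the operator defined via Fourier multipliers on all of $H^{l+d\alpha}$ (so that it is genuinely "well-defined"), one notes that $\mathcal{F}$ is an $L^{2}$-isometry and $H^{l}$ is, via $\mathcal{F}$, isometrically the weighted $L^{2}$-space with weight $[\max(1,\|\xi\|_{p})]^{l}$; multiplication by the locally constant function $|f(\xi)|_{p}^{\alpha}$ is then manifestly a bounded operator between the corresponding weighted spaces by exactly the estimate above, and it is continuous with respect to, and compatible with, the $S$-action.

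The only mild obstacle is the region $\|\xi\|_{p}\le 1$, where one must be careful that $d\alpha>0$ is what makes $\|\xi\|_{p}^{2d\alpha}\le 1$ rather than $\ge 1$; for $d\alpha=0$ the statement would be trivial but $\alpha>0$ and $d\ge 1$ rule that out. There is no analytic difficulty whatsoever beyond the two elementary case distinctions and an invocation of Lemma \ref{Lemma 1}; the density argument to pass from $S$ to $H^{l+d\alpha}$ is routine. I would also remark that the lower bound in Lemma \ref{Lemma 1}, namely $C_{0}\|\xi\|_{p}^{d}\le|f(\xi)|_{p}$, is not needed here — it will be relevant for the isomorphism statements (Propositions \ref{prop1}, \ref{prop2}) but not for mere continuity.
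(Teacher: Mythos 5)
Your proof is correct and follows essentially the same route as the paper: the upper bound $|f(\xi)|_{p}\le C_{1}\|\xi\|_{p}^{d}$ from Lemma \ref{Lemma 1}, the elementary comparison $\|\xi\|_{p}^{2d\alpha}\le[\max(1,\|\xi\|_{p})]^{2d\alpha}$, and extension by density of $S$ in $H^{l+d\alpha}$. The paper's version is terser (and contains a harmless typo, writing $H^{l+\alpha}$ where $H^{l+d\alpha}$ is meant); your added remarks on the case split and on the compatibility of the extension are fine but not a different argument.
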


\begin{proof}
Let $\phi\in S$. Since $f(D,\alpha)$ is an elliptic operator, by Lemma
\ref{Lemma 1}, we have that
\begin{align*}
||f(D,\alpha)\phi||_{H^{l}}^{2}  &  =\int_{\mathbb{Q}_{p}^{n}}[\text{max}%
(1,||\xi||_{p})]^{2l}|f(\xi)|^{2\alpha}|\mathcal{F}(\phi)(\xi)|^{2}\,d\xi\\
&  \leq C_{1}\int_{\mathbb{Q}_{p}^{n}}[\text{max}(1,||\xi||_{p})]^{2(l+\alpha
)}|\mathcal{F}(\phi)(\xi)|^{2}\,d\xi=C_{1}||\phi||_{H^{l+\alpha}}^{2}.
\end{align*}
The result follows from the fact that $S$ is dense in $H^{l+d\alpha}$.
\end{proof}

\begin{Remark}
\label{remark3}Let $\beta$ be a positive real number, and let
\[
I(\beta):=\int\limits_{||\varepsilon||_{p}\leq1}||\varepsilon||_{p}^{\beta
}\,d\varepsilon.
\]
Then
\[
I(\beta)=\dfrac{1-p^{-n}}{1-p^{-n-n\beta}},\text{ for }\beta>-n.
\]
Indeed,
\begin{align*}
I(\beta) &  =\int\limits_{||\varepsilon||_{p}<1}||\varepsilon||_{p}^{\beta
}\,d\varepsilon+\int\limits_{||\varepsilon||_{p}=1}\,d\varepsilon\\
&  =\int\limits_{||\varepsilon||_{p}<1}||\varepsilon||_{p}^{\beta
}\,d\varepsilon+1-p^{-n}.
\end{align*}
By making the change of variables $\varepsilon_{i}=px_{i}$, $i=1,\dotsc,n$, we
have
\[
I(\beta)=p^{-n-n\beta}I(\beta)+1-p^{-n}.
\]

\end{Remark}

\begin{Theorem}
\label{mainresult}Let $f(D,\alpha)$, $0<\alpha<n/2d$ be an elliptic
pseudo-differential operator of order $d$. Let $l$ be a positive real number
satisfying $l>n/2$. Then, the equation
\[
f(D,\alpha)u=v\quad\left(  v\in S \right)  ,
\]
has a unique uniformly continuous solution $u\in H^{l+d\alpha}$.
\end{Theorem}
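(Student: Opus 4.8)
The plan is to pass to the Fourier side and build the solution explicitly. Applying $\mathcal{F}$, the equation $f(D,\alpha)u=v$ becomes $|f(\xi)|_p^{\alpha}\,\mathcal{F}u(\xi)=\mathcal{F}v(\xi)$, so the natural candidate is the function $\psi(\xi):=|f(\xi)|_p^{-\alpha}\,\mathcal{F}v(\xi)$ for $\xi\neq 0$ (its value at the single point $\xi=0$ is irrelevant). First I would record that $\mathcal{F}$ extends to an isometry of $H^{l+d\alpha}$ onto the weighted space $L^{2}\bigl(\mathbb{Q}_p^n,[\max(1,\|\xi\|_p)]^{2(l+d\alpha)}\,d\xi\bigr)$; hence it suffices to show $\psi$ lies in this weighted $L^{2}$ space. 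Granting that, set $u:=\mathcal{F}^{-1}\psi\in H^{l+d\alpha}$. By Lemma \ref{Lemma 6} the element $f(D,\alpha)u\in H^{l}$ is well defined, and approximating $u$ by test functions and using the continuity in Lemma \ref{Lemma 6} together with the elementary bound $|f(\xi)|_p\le C_{1}(f)\|\xi\|_p^{d}$ of Lemma \ref{Lemma 1}, one checks that $\mathcal{F}(f(D,\alpha)u)(\xi)=|f(\xi)|_p^{\alpha}\psi(\xi)=\mathcal{F}v(\xi)$ for a.e.\ $\xi$, so $f(D,\alpha)u=v$.

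The core point is the finiteness of
$\|u\|_{H^{l+d\alpha}}^{2}=\int_{\mathbb{Q}_p^n}[\max(1,\|\xi\|_p)]^{2(l+d\alpha)}|f(\xi)|_p^{-2\alpha}|\mathcal{F}v(\xi)|^{2}\,d\xi$,
which I would split over $\|\xi\|_p\le 1$ and $\|\xi\|_p>1$. On the unit ball the weight equals $1$, and using the lower bound $|f(\xi)|_p\ge C_{0}(f)\|\xi\|_p^{d}$ of Lemma \ref{Lemma 1} and the boundedness of $\mathcal{F}v$, this part is dominated by a constant times $\int_{\|\xi\|_p\le 1}\|\xi\|_p^{-2d\alpha}\,d\xi$, which by Remark \ref{remark3} (with $\beta=-2d\alpha$) is finite exactly because $-2d\alpha>-n$, i.e.\ because $\alpha<n/2d$. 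Outside the unit ball $\mathcal{F}v$ has compact support and $|f(\xi)|_p^{-2\alpha}\le C_{0}(f)^{-2\alpha}$, so the integrand is bounded on a bounded set and contributes a finite amount. This splitting, and in particular the convergence of the integral near the origin, is where the hypothesis $0<\alpha<n/2d$ enters; it is the only delicate step, the rest being bookkeeping.

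For the uniform continuity of $u$, note that $l>n/2$ forces $l+d\alpha>n/2$, so Lemma \ref{Sobolev embedding Lemma} shows that $u\in H^{l+d\alpha}$ is represented by a uniformly continuous function. For uniqueness, if $u_{1},u_{2}\in H^{l+d\alpha}$ both solve the equation then $w:=u_{1}-u_{2}$ satisfies $f(D,\alpha)w=0$ in $H^{l}$, hence $|f(\xi)|_p^{\alpha}\mathcal{F}w(\xi)=0$ for a.e.\ $\xi$; since $f$ is elliptic, $|f(\xi)|_p>0$ for every $\xi\neq 0$, so $\mathcal{F}w=0$ a.e.\ and $w=0$ in $H^{l+d\alpha}$. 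I expect no obstacle beyond the weighted-$L^{2}$ estimate near $\xi=0$; the only subtlety to handle with care is to justify the identity $\mathcal{F}(f(D,\alpha)u)=|f|_p^{\alpha}\mathcal{F}u$ at the level of the completion $H^{l+d\alpha}$ rather than merely on $S$, which is legitimate by density of $S$ and the continuity assertion of Lemma \ref{Lemma 6}.
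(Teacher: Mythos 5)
Your proof is correct, and it takes a genuinely different route from the paper's. The paper first splits $v=v_{\mathcal{W}}+v_{\mathcal{L}}$ and treats the two pieces separately: for $v_{\mathcal{W}}\in\mathcal{W}$ it invokes the explicit fundamental solution of Theorem \ref{fund solu ellip} and the bound of Corollary \ref{cor1}, writing $u_{\mathcal{W}}=E_{\alpha}\ast v_{\mathcal{W}}$, while for $v_{\mathcal{L}}$ it performs exactly the direct Fourier division you propose. You observe, in effect, that the $\mathcal{L}$-part argument never uses the mean-zero condition --- it only needs the lower bound $|f(\xi)|_{p}\geq C_{0}(f)\|\xi\|_{p}^{d}$ of Lemma \ref{Lemma 1} together with the boundedness and compact support of $\mathcal{F}v$ --- so it applies verbatim to all of $v\in S$, and the decomposition and the fundamental-solution machinery can be bypassed entirely. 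Both arguments hinge on the same two facts: integrability of $\|\xi\|_{p}^{-2d\alpha}$ near the origin (Remark \ref{remark3}, requiring $\alpha<n/2d$) and the embedding of Lemma \ref{Sobolev embedding Lemma} for continuity. What your streamlined version loses is the identification of the solution (or at least of its $\mathcal{W}$-component) as a convolution $E_{\alpha}\ast v$ with an explicit kernel, which is part of the point of the paper; what it gains is a shorter, self-contained existence proof and a cleaner uniqueness argument --- your step ``$|f|_{p}^{\alpha}\mathcal{F}w=0$ a.e.\ implies $\mathcal{F}w=0$ a.e.\ implies $w=0$ in $H^{l+d\alpha}$ via the isometry'' is tidier than the paper's, which detours through pointwise values of $u-u^{\prime}$. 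Your care in identifying $H^{l+d\alpha}$ with the weighted $L^{2}$ space via the extended Fourier transform, and in passing the symbol identity $\mathcal{F}(f(D,\alpha)u)=|f|_{p}^{\alpha}\mathcal{F}u$ to the completion by density and Lemma \ref{Lemma 6}, addresses the only genuinely delicate points.
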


\begin{proof}
Let $v\in\mathcal{S}$, then $v=v_{\mathcal{W}}+v_{\mathcal{L}}$, where
$v_{\mathcal{W}}\in\mathcal{W}$ and $v_{\mathcal{L}}\in\mathcal{L}$. Thus, in
order to prove the existence of a solution $u$, it is sufficient to show that
the two following equations have solutions:
\begin{equation}
f(D,\alpha)u_{\mathcal{W}}=v_{\mathcal{W}}, \label{Eq v_W}%
\end{equation}%
\begin{equation}
f(D,\alpha)u_{\mathcal{L}}=v_{\mathcal{L}}. \label{Eq v_L}%
\end{equation}

We first consider equation (\ref{Eq v_W}). By Theorem \ref{fund solu ellip},
$u_{\mathcal{W}}=E_{\alpha}\ast v_{\mathcal{W}}$ is a solution of
(\ref{Eq v_W}), and by Corollary \ref{cor1}, we have%

\begin{align*}
||u_{\mathcal{W}}||_{H^{l+d\alpha}}^{2}  &  =\int_{\mathbb{Q}_{p}^{n}%
}[\text{max}(1,||\xi||_{p})]^{2(l+d\alpha)}|\mathcal{F}(E_{\alpha}\ast
v_{\mathcal{W}})(\xi)|^{2}\,d\xi\\
&  =C(\alpha,d,n)\int_{\mathbb{Q}_{p}^{n}}[\text{max}(1,||\xi||_{p}%
)]^{2(l+d\alpha)}|\left\Vert \xi\right\Vert _{p}^{-2d\alpha}|\mathcal{F}%
(v_{\mathcal{W}})(\xi)|^{2}\,d\xi\\
&  =C(\alpha,d,n)\Bigl\{\int_{||\xi||_{p}\leq1}||\xi||_{p}^{-2d\alpha
}|\mathcal{F}(v_{\mathcal{W}})(\xi)|^{2}\,d\xi\\
&  \hspace{1cm}+\int_{||\xi||_{p}>1}||\xi||_{p}^{2l}|\mathcal{F}%
(v_{\mathcal{W}})(\xi)|^{2}\,d\xi\Bigr\}.
\end{align*}

We now recall that $v_{\mathcal{W}}(\xi)=p^{rn}C\chi_{r}(\xi)$, with $r>0$.
Then, $\mathcal{F}(v_{\mathcal{W}})(\xi)=C\chi_{-r}(\xi)$ and
\begin{align*}
||u_{\mathcal{W}}||_{H^{l+d\alpha}}^{2} &  \leq C(\alpha,d,n)\Bigl\{C^{2}%
p^{2rn}\int\limits_{||\varepsilon||_{p}\leq1}||\varepsilon||_{p}^{-2d\alpha
}\,d\varepsilon\\
&  \hspace{2.5cm}+||v_{\mathcal{W}}||_{H^{l}}^{2}\Bigr\}\\
&  \leq C(\alpha,d,n)\Bigl\{C_{1}(\alpha,d,n)+||v_{\mathcal{W}}||_{H^{l}}%
^{2}\Bigr\},
\end{align*}
since $-2d\alpha>-n$, cf. Remark \ref{remark3}. Therefore $u_{\mathcal{W}}\in
H^{l+d\alpha}$.

We now consider equation (\ref{Eq v_L}). Since
\[
\mathcal{F}(u_{\mathcal{L}})=\mathcal{F}(v_{\mathcal{L}})|f|_{p}^{-\alpha},
\]
and $f$ is elliptic,
\[
|\mathcal{F}(u_{\mathcal{L}})(\xi)|\leq C||\xi||^{-d\alpha}|\mathcal{F}
(v_{\mathcal{L}})(\xi)|, \text{(cf. Lemma \ref{Lemma 1})}.
\]
Then,
\begin{align*}
||u_{\mathcal{L}}||_{H^{l+d\alpha}}^{2}  &  \leq\int_{||\xi||_{p}\leq1}%
||\xi||_{p}^{-2d\alpha}|\mathcal{F}(v_{\mathcal{L}})(\xi)|^{2}\,d\xi\\
&  \hspace{1cm}+\int_{||\xi||_{p}>1}||\xi||_{p}^{2l}|\mathcal{F}%
(v_{\mathcal{L}})(\xi)|^{2}\,d\xi.
\end{align*}

The second integral is bounded by $||v_{\mathcal{L}}||_{H^{l}}^{2}$. For the
first integral, we observe that if $0<\alpha<n/2d$, then
\[
\int_{||\xi||_{p}\leq1}||\xi||_{p}^{-2d\alpha}|\varphi(\xi)|^{2}\,d\xi\leq
C||\varphi||_{L^{2}},
\]
for any $\varphi\in S$. Therefore,
\[
||u_{\mathcal{L}}||_{H^{l+d\alpha}}^{2}\leq C||\mathcal{F}(v_{\mathcal{L}%
})||_{L^{2}}+||v_{\mathcal{L}}||_{H^{l}}^{2}.
\]

In this way, we established the existence of $u\in H^{l+d\alpha}$ which is
uniformly continuous, by Lemma \ref{Sobolev embedding Lemma}, such that
$f(D,\alpha)u=v$, for any $v\in S$. Finally, we show that $u$ is unique.
Indeed, if $f(D,\alpha)u^{\prime}=v$, then
\[
f(D,\alpha)(u-u^{\prime})=0,\qquad\text{i.e., }|f|_{p}^{\alpha}\mathcal{F}%
(u-u^{\prime})=0,
\]
and thus $\mathcal{F}(u-u^{\prime})(\xi)=0$ if $\xi\neq0$, since $f$ is
elliptic. Then $\Psi(x\cdot\xi)(u-u^{\prime})(\xi)=0$ almost everywhere, and a
fortiori $(u-u^{\prime})(\xi)=0$ almost everywhere, and by the continuity of
$u-u^{\prime}$, $u(\xi)=u^{\prime}(\xi)$ for any $\xi\in\mathbb{Q}_{p}^{n}$.
\end{proof}

\section{Solutions of Elliptic Pseudo-Differential Equations in Singular
Sobolev Spaces}

In this section, we modify the Sobolev norm to obtain spaces of functions on
which $f(D,\alpha)$ gives a surjective mapping.

\begin{Definition}
Given $\varphi\in S$ and $l$ a non-negative number, we set
\[
||\varphi||_{\mathcal{H}^{l}}^{2}:=\int_{\mathbb{Q}_{p}^{n}}||\xi||_{p}%
^{2l}|\mathcal{F}(\varphi)(\xi)|^{2}\,d\xi.
\]
We call the completion of $S$ with respect to $||\cdot||_{\mathcal{H}^{l}}$
the \emph{$l$-singular Sobolev Space} $\mathcal{H}^{l}:=\mathcal{H}%
^{l}(\mathbb{Q}_{p}^{n})$. Note that $H^{l}\subseteq\mathcal{H}^{l}$, $l\geq
0$, since $||\varphi||_{\mathcal{H}^{l}}\leq||\varphi||_{H^{l}}$.
\end{Definition}

\begin{Lemma}
\label{lemma3}For any, $\alpha>0$, $l\geq0$, the mapping $f(D,\alpha
):\mathcal{H}^{l+d\alpha}\rightarrow\mathcal{H}^{l}$ is a well-defined
continuous mapping between Banach Spaces.
\end{Lemma}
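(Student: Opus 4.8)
The plan is to mirror the proof of Lemma~\ref{Lemma 6}, but without the cutoff at $||\xi||_p = 1$, since the singular Sobolev norm uses the pure power $||\xi||_p^{2l}$ rather than $[\mathrm{max}(1,||\xi||_p)]^{2l}$. First I would take $\varphi \in S$ and compute directly
\[
||f(D,\alpha)\varphi||_{\mathcal{H}^{l}}^{2} = \int_{\mathbb{Q}_{p}^{n}} ||\xi||_{p}^{2l} |f(\xi)|_{p}^{2\alpha} |\mathcal{F}(\varphi)(\xi)|^{2}\,d\xi.
\]
Applying Lemma~\ref{Lemma 1} (the elliptic estimate $|f(\xi)|_p \le C_1(f) ||\xi||_p^d$), the integrand is bounded by $C_1(f)^{2\alpha} ||\xi||_p^{2l + 2d\alpha} |\mathcal{F}(\varphi)(\xi)|^2$, and integrating gives $||f(D,\alpha)\varphi||_{\mathcal{H}^{l}}^{2} \le C_1(f)^{2\alpha} ||\varphi||_{\mathcal{H}^{l+d\alpha}}^{2}$. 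This shows $f(D,\alpha)$ maps $S$ (dense in $\mathcal{H}^{l+d\alpha}$) boundedly into $\mathcal{H}^{l}$ with respect to the relevant norms.

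Next I would address well-definedness, i.e.\ that the operator extends to the completion: since $f(D,\alpha)$ is bounded on the dense subspace $S$ from $(S, ||\cdot||_{\mathcal{H}^{l+d\alpha}})$ to $(S, ||\cdot||_{\mathcal{H}^{l}}) \subseteq \mathcal{H}^{l}$, it extends uniquely by continuity to a bounded linear map $\mathcal{H}^{l+d\alpha} \to \mathcal{H}^{l}$, and both spaces are Banach by construction (completions of normed spaces). One subtlety worth a line: one should check the extension agrees with the Fourier-multiplier description $\mathcal{F}(f(D,\alpha)u) = |f|_p^{\alpha}\mathcal{F}(u)$ for general $u$ in the completion, which follows because the Fourier transform is an isometry-type correspondence identifying $\mathcal{H}^{l}$ with a weighted $L^2$ space and $|f|_p^{\alpha}$ is a measurable multiplier satisfying the pointwise bound.

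The only place that requires a moment's care—and the main (minor) obstacle—is that, unlike in Lemma~\ref{Lemma 6}, there is no truncation, so the estimate genuinely needs the two-sided control of $|f(\xi)|_p$ by $||\xi||_p^d$ near $\xi = 0$ as well; but the upper bound in Lemma~\ref{Lemma 1} is exactly two-sided-ready (it holds for \emph{every} $\xi \in \mathbb{Q}_p^n$, including near the origin), so no difficulty arises. I would close by remarking that this is precisely the point of passing to the singular Sobolev spaces: the homogeneous weight $||\xi||_p^{2l}$ transforms covariantly under the order-$d\alpha$ shift induced by $f(D,\alpha)$, which is what makes surjectivity (and eventually the isomorphism statement of Theorem~\ref{mainII}) accessible, whereas the inhomogeneous weight $[\mathrm{max}(1,||\xi||_p)]^{2l}$ does not. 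Thus the proof is a short direct computation plus the standard density/extension argument, with Lemma~\ref{Lemma 1} doing the essential work.
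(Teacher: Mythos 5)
Your proof is correct and follows essentially the same route as the paper, which simply refers back to the proof of Lemma~\ref{Lemma 6}: the upper bound $|f(\xi)|_p^{\alpha}\leq C_1(f)^{\alpha}\|\xi\|_p^{d\alpha}$ from Lemma~\ref{Lemma 1} combines with the homogeneous weight to give $\|f(D,\alpha)\varphi\|_{\mathcal{H}^{l}}\leq C\|\varphi\|_{\mathcal{H}^{l+d\alpha}}$ on $S$, and one extends by density. (One small remark: only the upper bound of Lemma~\ref{Lemma 1} is needed here — the two-sided estimate you invoke near $\xi=0$ becomes relevant only for the surjectivity statements, not for boundedness.)
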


\begin{proof}
Similar to the proof of Lemma \ref{Lemma 6}.
\end{proof}

We denote by $\mathcal{L}^{l}$ and $\mathcal{W}^{l}$, the respective
completions of $\mathcal{L}$ and $\mathcal{W}$ with respect to $||\cdot
||_{\mathcal{H}^{l}}$; furthermore, we set
\[
\mathcal{H}_{0}^{l}:=\mathcal{L}^{l}+\mathcal{W}^{l}\subseteq\mathcal{H}^{l}.
\]

\begin{Proposition}
\label{prop1}Let $f(D,\alpha)$, $\alpha>0$, be an elliptic pseudo-differential
operator of order $d$, and let $l$ be a non-negative real number. Then
$f(D,\alpha):\mathcal{H}^{l+d\alpha}\rightarrow\mathcal{W}^{l}$, is a
surjective mapping between Banach spaces.
\end{Proposition}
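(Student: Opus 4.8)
The goal is to show that every element of $\mathcal{W}^{l}$ lies in the image of $f(D,\alpha)$ acting on $\mathcal{H}^{l+d\alpha}$. Since $\mathcal{W}^{l}$ is the closure of $\mathcal{W}$ and $f(D,\alpha)$ is continuous between the relevant Banach spaces (Lemma \ref{lemma3}), it suffices to produce, for each $w\in\mathcal{W}$, a preimage $u\in\mathcal{H}^{l+d\alpha}$ with a norm bound of the form $||u||_{\mathcal{H}^{l+d\alpha}}\leq C\,||w||_{\mathcal{H}^{l}}$, with $C$ independent of $w$; then the image of a Cauchy sequence in $\mathcal{W}$ lifts to a Cauchy sequence of preimages, whose limit solves the equation for the limit datum. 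The natural candidate preimage is $u=E_{\alpha}\ast w$, the fundamental solution from Theorem \ref{fund solu ellip}.

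\textbf{Key steps.} First I would fix $w\in\mathcal{W}$, write $w=\sum_{i}c_{i}\chi_{r_{i}}$, and recall from the proof of Theorem \ref{fund solu ellip} and Corollary \ref{cor1} that $\mathcal{F}(E_{\alpha}\ast w)(\xi)=C(\alpha,d,n)\,||\xi||_{p}^{-d\alpha}\,\mathcal{F}(w)(\xi)$ as elements of $\mathcal{W}$ (the space $\mathcal{W}$ being Fourier-stable, and $|f|_{p}^{\alpha}$ behaving like a constant multiple of $||\xi||_{p}^{d\alpha}$ on $\mathcal{W}$ by Lemma \ref{elliptic = Taibleson}). Hence
\[
||E_{\alpha}\ast w||_{\mathcal{H}^{l+d\alpha}}^{2}=C^{2}\int_{\mathbb{Q}_{p}^{n}}||\xi||_{p}^{2(l+d\alpha)}\,||\xi||_{p}^{-2d\alpha}\,|\mathcal{F}(w)(\xi)|^{2}\,d\xi=C^{2}\int_{\mathbb{Q}_{p}^{n}}||\xi||_{p}^{2l}\,|\mathcal{F}(w)(\xi)|^{2}\,d\xi=C^{2}\,||w||_{\mathcal{H}^{l}}^{2}.
\]
The crucial point here — and the reason the singular norm $||\cdot||_{\mathcal{H}^{l}}$ (rather than $||\cdot||_{H^{l}}$) is used — is that the factors $||\xi||_{p}^{\pm 2d\alpha}$ cancel exactly with no leftover $\max(1,||\xi||_{p})$ weight, so no integrability obstruction at $\xi=0$ arises: the identity is an equality of norms, not merely an inequality, and in particular $C$ does not depend on $w$. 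Second, I would verify that $E_{\alpha}\ast w$ actually lies in $\mathcal{H}^{l+d\alpha}$ (not just has finite norm): since $\mathcal{F}(E_{\alpha}\ast w)=C\,||\xi||_{p}^{-d\alpha}\mathcal{F}(w)$ is a finite combination of functions $||\xi||_{p}^{-d\alpha}\chi_{-r_{i}}(\xi)$, each of which is the Fourier transform of a Bruhat–Schwartz-like function modified only near $0$, one checks it is approximable in $||\cdot||_{\mathcal{H}^{l+d\alpha}}$ by elements of $S$ — indeed by elements of $\mathcal{W}$-type functions — so it is a genuine element of the completion; alternatively one notes directly that $||\xi||_{p}^{-d\alpha}\chi_{-r}$ is handled by the same cutting argument used in Lemma \ref{elliptic = Taibleson}. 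Third, I would check $f(D,\alpha)(E_{\alpha}\ast w)=w$: apply $|f|_{p}^{\alpha}$ to $\mathcal{F}(E_{\alpha}\ast w)$ and use Lemma \ref{elliptic = Taibleson} on $\mathcal{W}$ to recover $\mathcal{F}(w)$, exactly as in Theorem \ref{fund solu ellip}. Finally, I would pass to the limit: given $w\in\mathcal{W}^{l}$, take $w_{k}\in\mathcal{W}$ with $w_{k}\to w$ in $\mathcal{H}^{l}$; the preimages $u_{k}=E_{\alpha}\ast w_{k}$ satisfy $||u_{k}-u_{j}||_{\mathcal{H}^{l+d\alpha}}=C\,||w_{k}-w_{j}||_{\mathcal{H}^{l}}$, hence form a Cauchy sequence, converging to some $u\in\mathcal{H}^{l+d\alpha}$; continuity of $f(D,\alpha)$ (Lemma \ref{lemma3}) gives $f(D,\alpha)u=\lim f(D,\alpha)u_{k}=\lim w_{k}=w$.

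\textbf{Main obstacle.} The routine parts are the norm identity and the limiting argument; the delicate point is justifying that $E_{\alpha}\ast w$ is a bona fide element of the Banach space $\mathcal{H}^{l+d\alpha}$ (an abstract completion of $S$) rather than merely a distribution with finite formal norm, and that the action of $f(D,\alpha)$ on it agrees with the extended operator of Lemma \ref{lemma3}. This requires exhibiting an explicit approximating sequence in $S$ — most cleanly, truncating $||\xi||_{p}^{-d\alpha}\chi_{-r}(\xi)$ away from $\xi=0$ and estimating the tail using $-2d\alpha > -n$ is not automatic here since $\alpha$ is unrestricted, so one instead keeps the weight $||\xi||_{p}^{2l}$ and notes $\int_{||\xi||_{p}\leq1}||\xi||_{p}^{2l-2d\alpha}\cdot||\xi||_{p}^{2d\alpha}\,d\xi$... more carefully, one approximates $\mathcal{F}(w)$ itself, not its product with $||\xi||_{p}^{-d\alpha}$, by functions vanishing near $0$ and transported through the bounded multiplier; I would phrase this as: the subspace of $\mathcal{W}$ consisting of functions vanishing in a neighbourhood of the origin is dense in $\mathcal{W}^{l}$, and on that subspace $E_{\alpha}\ast(-)$ lands visibly in $S$-closure, so continuity extends it to all of $\mathcal{W}^{l}$. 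Handling this density statement cleanly is where the real work lies.
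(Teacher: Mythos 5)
Your proposal is correct and follows essentially the same route as the paper: apply the fundamental solution $E_{\alpha}$ to the terms of an approximating sequence in $\mathcal{W}$, use the exact cancellation $||\xi||_{p}^{2(l+d\alpha)}||\xi||_{p}^{-2d\alpha}=||\xi||_{p}^{2l}$ in the singular norm to obtain a Cauchy sequence of preimages, and conclude by the continuity of $f(D,\alpha)$ from Lemma \ref{lemma3}. You are in fact more careful than the paper on the one delicate point you flag: the paper simply asserts that the preimages lie in $H^{l+d\alpha}$, which for unrestricted $\alpha>0$ is not justified (that membership required $0<\alpha<n/2d$ in Theorem \ref{mainresult}), whereas only membership in $\mathcal{H}^{l+d\alpha}$ is needed and is exactly what the cancellation in the singular norm delivers.
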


\begin{proof}
By Lemma \ref{lemma3}, the mapping is well-defined. Let $v\in\mathcal{W}^{l}$,
and let $\{v_{n}\}$ a Cauchy sequence in $\mathcal{W}$ converging to $v$. By
Theorem \ref{fund solu ellip}, there exits a sequence $\{u_{n}\}$ in
$H^{l+d\alpha}$ such that $f(D,\alpha)u_{n}=v_{n}$. We now show that
$\{u_{n}\}$ is a Cauchy sequence in $\mathcal{H}^{l+d\alpha}$ as follows:
\begin{align*}
||u_{n}-u_{m}||_{\mathcal{H}^{l+d\alpha}}^{2}  &  \leq C\int_{\mathbb{Q}%
_{p}^{n}}||\xi||_{p}^{2(l+d\alpha)}||\xi||_{p}^{-2d\alpha}|\mathcal{F}%
(v_{n}-v_{m})(\xi)|^{2}\,d\xi\\
&  \leq C||v_{n}-v_{m}||_{\mathcal{H}^{l}}^{2}.
\end{align*}
Thus, there exists $u\in\mathcal{H}^{l+d\alpha}$ such that $u_{n}\rightarrow
u$, and by the continuity of $f(D,\alpha)$, $f(D,\alpha)u=v$.
\end{proof}

\begin{Proposition}
\label{prop2}Let $f(D,\alpha)$, $\alpha>0$, be an elliptic pseudo-differential
operator of order $d$, and let $l$ be a non-negative real number. Then,
$f(D,\alpha):\mathcal{H}^{l+d\alpha}\rightarrow\mathcal{L}^{l}$ is a
surjective mapping between Banach spaces.
\end{Proposition}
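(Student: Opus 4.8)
The plan is to mirror the proof of Proposition \ref{prop1}, replacing the space $\mathcal{W}$ by $\mathcal{L}$ throughout. First I would note that by Lemma \ref{lemma3} the mapping $f(D,\alpha)\colon\mathcal{H}^{l+d\alpha}\to\mathcal{H}^{l}$ is well-defined and continuous, and since $\mathcal{L}^{l}\subseteq\mathcal{H}^{l}$ it suffices to produce a preimage in $\mathcal{H}^{l+d\alpha}$ for every $v\in\mathcal{L}^{l}$. Fix such a $v$ and choose a sequence $\{v_{k}\}$ in $\mathcal{L}$ with $v_{k}\to v$ in the $\mathcal{H}^{l}$-norm; this is possible by the very definition of $\mathcal{L}^{l}$ as the completion of $\mathcal{L}$.

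Next I would solve the equation at the level of the approximating sequence. For $v_{k}\in\mathcal{L}\subseteq S$ we may set $u_{k}=\mathcal{F}^{-1}\bigl(|f|_{p}^{-\alpha}\mathcal{F}(v_{k})\bigr)$; since $v_{k}\in\mathcal{L}$ its Fourier transform vanishes in a neighbourhood of the origin (indeed $\mathcal{F}(v_{k})(0)=\int v_{k}=0$ and $\mathcal{F}(v_{k})$ is locally constant), so $|f|_{p}^{-\alpha}\mathcal{F}(v_{k})$ is a well-defined element of $S$ and $u_{k}\in S$ satisfies $f(D,\alpha)u_{k}=v_{k}$. Then I would show $\{u_{k}\}$ is Cauchy in $\mathcal{H}^{l+d\alpha}$ exactly as in Proposition \ref{prop1}: using Lemma \ref{Lemma 1}, $|\mathcal{F}(u_{k}-u_{m})(\xi)|\le C\|\xi\|_{p}^{-d\alpha}|\mathcal{F}(v_{k}-v_{m})(\xi)|$, whence
\[
\|u_{k}-u_{m}\|_{\mathcal{H}^{l+d\alpha}}^{2}\le C\int_{\mathbb{Q}_{p}^{n}}\|\xi\|_{p}^{2(l+d\alpha)}\|\xi\|_{p}^{-2d\alpha}|\mathcal{F}(v_{k}-v_{m})(\xi)|^{2}\,d\xi = C\|v_{k}-v_{m}\|_{\mathcal{H}^{l}}^{2}\to 0.
\]
By completeness of $\mathcal{H}^{l+d\alpha}$ there is $u\in\mathcal{H}^{l+d\alpha}$ with $u_{k}\to u$, and by continuity of $f(D,\alpha)$ (Lemma \ref{lemma3}) we get $f(D,\alpha)u=\lim f(D,\alpha)u_{k}=\lim v_{k}=v$, which proves surjectivity.

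The one point that requires slightly more care than in Proposition \ref{prop1} — and which I expect to be the main (though still minor) obstacle — is the clean definition of $u_{k}$ as a Schwartz function: one must check that dividing $\mathcal{F}(v_{k})$ by $|f|_{p}^{\alpha}$ keeps us inside $S$. This is where the membership $v_{k}\in\mathcal{L}$ is essential: ellipticity gives $|f(\xi)|_{p}^{-\alpha}\le C_{0}(f)^{-\alpha}\|\xi\|_{p}^{-d\alpha}$ away from the origin by Lemma \ref{Lemma 1}, and since $\mathcal{F}(v_{k})$ is compactly supported and vanishes near $0$, the product $|f|_{p}^{-\alpha}\mathcal{F}(v_{k})$ is compactly supported and locally constant, hence lies in $S$; applying $\mathcal{F}^{-1}$ returns an element of $S$. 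With that subtlety handled, the rest is a verbatim repetition of the argument for $\mathcal{W}^{l}$.
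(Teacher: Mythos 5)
Your proposal is correct and follows essentially the same route as the paper: approximate $v$ by a sequence in $\mathcal{L}$, solve for each term by dividing the Fourier transform by the symbol, use Lemma \ref{Lemma 1} to get the Cauchy estimate in $\mathcal{H}^{l+d\alpha}$, and pass to the limit via continuity of $f(D,\alpha)$. Your explicit verification that $|f|_{p}^{-\alpha}\mathcal{F}(v_{k})$ stays in $S$ (because $\mathcal{F}(v_{k})$ is locally constant and vanishes near the origin) is a point the paper only handles implicitly by referring back to the proof of Theorem \ref{mainresult}, so making it explicit is a welcome clarification rather than a deviation.
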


\begin{proof}
By Lemma \ref{lemma3}, the mapping is well-defined. Let $v\in\mathcal{L}^{l}$,
and let $\{v_{n}\}$ a Cauchy sequence in $\mathcal{L}$ converging to $v$. By
the same reasoning given in proof Theorem \ref{mainresult} for establishing
the existence of a solution for equation (\ref{Eq v_L}), we obtain a sequence
$\{u_{n}\}$ in $H^{l+d\alpha}$ such that $f(D,\alpha)u_{n}=v_{n}$. We now show
that $\{u_{n}\}$ is a Cauchy sequence in $\mathcal{H}^{l+d\alpha}$.

By using
\[
|\mathcal{F}(u_{n})(\xi)|\leq C||\xi||^{-d\alpha}|\mathcal{F}(v_{n})(\xi)|,
\]
one gets
\begin{align*}
||u_{n}-u_{m}||_{\mathcal{H}^{l+d\alpha}}^{2}  &  \leq C\int_{\mathbb{Q}%
_{p}^{n}}||\xi||_{p}^{2(l+d\alpha)}||\xi||_{p}^{-2d\alpha}|\mathcal{F}%
(v_{n}-v_{m})(\xi)|^{2}\,d\xi\\
&  \leq C||v_{n}-v_{m}||_{\mathcal{H}^{l}}^{2}.
\end{align*}
Thus, there exists $u\in\mathcal{H}^{l+d\alpha}$ such that $u_{n}\rightarrow
u$, and by the continuity of $f(D,\alpha)$, $f(D,\alpha)u=v$.
\end{proof}

From the previous two lemmas we obtain the following result.

\begin{Theorem}
\label{mainII}Let $f(D,\alpha)$ be an elliptic pseudo-differential operator of
order $d$. Let $l$ be a positive \ real number. Then the equation
\[
f(D,\alpha)u=v,\quad v\in\mathcal{H}_{0}^{l}%
\]
has a unique solution $u\in\mathcal{H}^{l+d\alpha}$.
\end{Theorem}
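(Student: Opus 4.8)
The plan is to combine the two surjectivity statements (Propositions \ref{prop1} and \ref{prop2}) with the decomposition $\mathcal{H}_0^l = \mathcal{L}^l + \mathcal{W}^l$ to produce existence, and then to use the ellipticity of $f$ together with a density argument to establish uniqueness. First I would take $v \in \mathcal{H}_0^l$ and write $v = v_{\mathcal{L}} + v_{\mathcal{W}}$ with $v_{\mathcal{L}} \in \mathcal{L}^l$ and $v_{\mathcal{W}} \in \mathcal{W}^l$. By Proposition \ref{prop2} there is $u_{\mathcal{L}} \in \mathcal{H}^{l+d\alpha}$ with $f(D,\alpha)u_{\mathcal{L}} = v_{\mathcal{L}}$, and by Proposition \ref{prop1} there is $u_{\mathcal{W}} \in \mathcal{H}^{l+d\alpha}$ with $f(D,\alpha)u_{\mathcal{W}} = v_{\mathcal{W}}$. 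Setting $u = u_{\mathcal{L}} + u_{\mathcal{W}} \in \mathcal{H}^{l+d\alpha}$ and using linearity of $f(D,\alpha)$ gives $f(D,\alpha)u = v$, which settles existence.

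For uniqueness, suppose $u, u' \in \mathcal{H}^{l+d\alpha}$ both solve the equation. Then $w := u - u' \in \mathcal{H}^{l+d\alpha}$ satisfies $f(D,\alpha)w = 0$, i.e. $|f|_p^{\alpha}\,\mathcal{F}(w) = 0$ as distributions (or in $L^2_{\mathrm{loc}}$, interpreting $\mathcal{F}(w)$ via the completion). Since $f$ is an elliptic polynomial, $f(\xi) = 0$ only at $\xi = 0$ (Lemma \ref{Lemma 1} gives $|f(\xi)|_p \geq C_0(f)\|\xi\|_p^d > 0$ for $\xi \neq 0$), so $\mathcal{F}(w)(\xi) = 0$ for all $\xi \neq 0$, hence $\mathcal{F}(w) = 0$ as an element of the relevant $L^2$-type space (a single point has measure zero). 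Therefore $\|w\|_{\mathcal{H}^{l+d\alpha}} = 0$, which means $w = 0$ in $\mathcal{H}^{l+d\alpha}$, i.e. $u = u'$.

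The main obstacle I anticipate is making the uniqueness argument rigorous at the level of the completion $\mathcal{H}^{l+d\alpha}$ rather than for test functions: an element of $\mathcal{H}^{l+d\alpha}$ is an equivalence class of Cauchy sequences, and one must check that $\mathcal{F}$ and multiplication by $|f|_p^{\alpha}$ are well-defined on it, that $f(D,\alpha)w = 0$ genuinely forces $\mathcal{F}(w)$ to vanish away from the origin, and that the $\mathcal{H}^{l+d\alpha}$-norm of a function supported at the origin is zero. The cleanest route is to identify $\mathcal{H}^{l+d\alpha}$ with the weighted $L^2$-space of functions $g(\xi)$ with $\int \|\xi\|_p^{2(l+d\alpha)}|g(\xi)|^2\,d\xi < \infty$ via $g = \mathcal{F}(u)$ (here one should note $l + d\alpha > 0$ so the weight is integrable near $0$ against $L^2$ functions in the sense needed), under which $f(D,\alpha)$ becomes multiplication by $|f(\xi)|_p^{\alpha}$; then $|f(\xi)|_p^{\alpha} g(\xi) = 0$ a.e. and $|f(\xi)|_p^{\alpha} > 0$ a.e. immediately give $g = 0$ a.e. A secondary point worth a remark is that, unlike in Theorem \ref{mainresult}, here we do not claim continuity of $u$ (no hypothesis $l > n/2$ or $\alpha < n/2d$ is imposed), so uniqueness must be argued purely in the Hilbert-space sense; the argument above does exactly that.
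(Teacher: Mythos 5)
Your proof is correct and matches the paper's (one-line) derivation of this theorem from Propositions \ref{prop1} and \ref{prop2}: existence comes from splitting $v\in\mathcal{H}_0^l=\mathcal{L}^l+\mathcal{W}^l$ and applying the two surjectivity propositions plus linearity, exactly as you do. The uniqueness half is something the paper does not actually prove --- the continuity-based argument used in Theorem \ref{mainresult} is unavailable here since no hypothesis forces $l>n/2$ --- and your identification of $\mathcal{H}^{l+d\alpha}$ via $\mathcal{F}$ with (a closed subspace of) the weighted $L^{2}$-space with weight $\|\xi\|_p^{2(l+d\alpha)}$, on which $f(D,\alpha)$ acts as multiplication by the a.e.\ positive function $|f(\xi)|_p^{\alpha}$, is the right way to supply that missing piece.
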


\end{document}